\newtheorem{definition}{Definition}
\newtheorem{proposition}{Proposition}
\newtheorem{theorem}{Theorem}
\newtheorem{example}{Example}
\newtheorem{lemma}{Lemma}
\newtheorem{corollary}{Corollary}
\newcommand{\commentout}[1]{}
\title{The Value of Mediation in Long Cheap Talk}
\author{Itai Arieli \and Ivan Geffner \and Moshe Tennenholtz\thanks{The work by Ivan Geffner and Moshe Tennenholtz was supported by funding from
the European Research Council (ERC) under the European Union’s Horizon 2020
research and innovation programme (grant agreement 740435).}}
\date{}
\begin{document}

\maketitle

\begin{abstract}
In this paper, we study an extension of the classic long cheap talk equilibrium introduced by Aumann and Hart~\cite{aumann-hart-03}, and ask how much can the players benefit from having a trusted mediator compared with the standard unmediated model. 
We focus on a setting where a fully informed sender without commitment power must disclose its information to influence the behavior of a self-interested receiver. 
We show that, in the case of binary actions, even though a mediator does not help neither the sender nor the receiver directly, it may still allow improving the payoff of an external decision-maker whose utility is affected by the realized state and the receiver's action. 
Moreover, we show that if there are more than two actions, there exist games in which both the sender and the receiver simultaneously benefit from mediation.
\end{abstract}

\section{Introduction}

The study of Strategic Information Transmission in economic theory has been ongoing for many years. 
In these models, there are usually two phases: a communication phase in which the players communicate with each other, and an action phase where the players  play an action based on the outcome of the communication phase. In this work, we study the influence of a trusted third-party mediator on the outcome efficiency and on the utility of the players in such models. 

We focus on \emph{information transmission games}, which are games where a fully informed sender must strategically disclose information about her type to the receiver in order to influence the action it takes. The utility of both players depends on the sender's type, which represents the realized state of the world, and on the action played by the receiver. We study two models of communication, the \emph{long cheap talk} model, and the \emph{mediated model}. The long cheap talk model is an extension of the one introduced by Aumann and Hart~\cite{aumann-hart-03} where players freely exchange messages in arbitrarily many rounds of communication and play an action afterwards. The mediated model is analogous, except that players can only communicate with a trusted third party mediator (instead of communicating with other players).

Our aim is to find how much a system benefits from having mediated communication in contrast to long cheap talk. For this purpose, we extend Ashlagi, Monderer and Tennenholtz's ~\cite{ashlagi2008value} 
notion of \emph{value of mediation}\footnote{The term \emph{value of mediation} has been used several times in the literature with different meanings. Even though at a high level they all describe the benefit of having mediated communication as opposed to unmediated communication, the details of each definition differ. See Section~\ref{sec:related-literature} for details.} as follows: given an information transmission game $\Gamma$ and a utility function $w$ that represents the total welfare of the system, we define the \emph{value of mediation} of $\Gamma$ (w.r.t. $w$) as the ratio between the maximal $w$-utility that can be achieved in equilibrium in the mediated model and the maximal $w$-utility that can be achieved in equilibrium in the long cheap talk model.
Intuitively, a higher value of mediation implies that the system benefits more from mediated communication relative to long cheap talk. Moreover, it is straightforward to check that having a mediator never hurts the system, and thus that the value of mediation is at least $1$ for all games $\Gamma$ and all positive welfare functions $w$.

Our first result shows that, in all information transmission games where the receiver has only two actions, the value of mediation is $1$ for all welfare functions that are monotone over the utilities of the sender and the receiver. In particular, this shows that a mediator cannot increase the utility of the sender or that of the receiver in equilibrium. However, we show that there exist games with two actions and welfare functions $w$ such that the value of mediation is arbitrarily high. This means that, in the case of binary actions, even though a mediator cannot help the sender or the receiver directly, it can still be used to greatly increase the utility of a third-party agent.

Our second result shows that, in contrast to the case with only two actions, there exist games with three actions in which the value of mediation is strictly positive for both the utility of the sender and the utility of the receiver. This shows that the restriction to binary actions in our first result is tight.

\subsection{Related Literature}\label{sec:related-literature}

The study of strategic information transmission started with the seminal work of Crawford and Sobel~\cite{crawford1982strategic}, who studied information transmission games in which the sender and the receiver only have one round of communication. The concept of \emph{long cheap talk}, where players can communicate using arbitrarily many rounds, was long present in the distributed computing and cryptography literature, especially in that related to the design and analysis of multiparty protocols that are resilient to faulty behavior~\cite{pease1980reaching, ben1983another, rabin1983randomized, bracha1985asynchronous, yao1986generate, bgw88, ben1993asynchronous}. 
However, it was not until some time later that long cheap talk was also studied from a game theoretic point of view~\cite{forges1990equilibria,
aumann-hart-03,
krishna2004art,
ADGH06,
goltsman2009mediation,
adgh19}. Forges~\cite{forges1990equilibria} analyzed the effects of long cheap talk in a setting that emulates a job interview. She also pointed out the difference between the outcome of long cheap talk versus (one round) cheap talk. 
Ben-Porath~\cite{ben1998correlation} gave a way to simulate mediated communication using long cheap talk if players have a punishment strategy. This work was later generalized by Abraham et al.~\cite{ADGH06, abraham2008lower, adgh19}, and Geffner and Halpern~\cite{geffner2023lower,geffner2023communication}, who gave a way to simulate mediated communication even in the presence of coalitions up to a certain size. A direct consequence of these results is that, if we consider information transmission games with several senders (these games are sometimes referred to as \emph{information aggregation games}~\cite{rausser2015rational,arieli2023resilient}), the value of mediation is $1$ with respect to any welfare function if the number of senders is large enough. In particular, the value of mediation is $1$ if there are at least three senders or if there are at least two senders and players have a punishment strategy. This implies that, even with two senders, a mediator cannot Pareto-improve the utilities of all of the players involved (for a short discussion about these results, see Section~\ref{sec:information-aggregation}). 
Aumann and Hart~\cite{aumann-hart-03} provided a complete characterization of the set of Bayesian Nash equilibria in information transmission games with long cheap talk. However, their characterization is mainly abstract and, as a consequence, it cannot be used to construct an algorithm that tells if mediated communication improves over long cheap talk, even for specific classes of games. Krishna and Morgan~\cite{krishna2004art} showed that the inclusion of additional rounds of communication in the model of Crawford and Sobel can Pareto-improve the utility of both the sender and the receiver with respect to one round cheap talk. Goltsman et al.~\cite{goltsman2009mediation} compared the outcome between mediated communication and long cheap talk in a family of information transmission games where the utility of the sender and the receiver have a specific form. They characterize the games in this family in which long cheap talk performs as well as mediated communication. Our work complements these results by (a) showing that mediated communication does not improve over long cheap talk in \emph{any} information transmission game with binary actions, and (b) by providing an information transmission game with three actions in which mediated communication provides a Pareto-improvement over long cheap talk. Even though an example in which mediation Pareto-improves long cheap talk already exists in the literature (see, for instance, Goltsman et al.~\cite{goltsman2009mediation}), to the best of our knowledge this is the first \emph{finite} example. Moreover, as a consequence of (a), the example we provide is the smallest possible with respect to the number of actions.

Our work is also closely related to the analysis of the efficiency of strategic interactions. The classic measure of efficiency in games is the Price of Anarchy (see Koutsoupias and Papadimitriou~\cite{koutsoupias1999worst}  and Roughgarden and Tardos~\cite{roughgarden2007introduction}). Inspired by this concept, Ashlagi, Monderer and Tennenholtz~\cite{ashlagi2008value} introduce two notions to measure the influence of mediation.
The first one is the \emph{enforcement value}, which measures the ratio between the maximal welfare of a strategy profile (not necessarily in equilibrium) and the maximal welfare obtained in a correlated equilibrium, and the second one is the  the \emph{value of mediation}, which measures the ratio between the maximal welfare obtained in a correlated equilibrium and the maximal welfare obtained in a mixed-strategy equilibrium. They provide a characterization for these measures in some families of two-player games. We expand on their definition of \emph{value of mediation} first by allowing players to have private types, and second by allowing certain flexibility when defining what we consider to be the welfare of the system. Typically, the welfare of the system is assumed to be the sum of the utilities of all the agents involved. However, in some cases we might be more interested on analyzing the utility of the sender (in this case we take the welfare function to simply be the utility of the sender), analyzing that of the receiver, or to analyze a completely different welfare function $w$.
For this purpose, we allow our definition to take the welfare function $w$ as an input and output the ratio between the maximum of $w$ in a mediated equilibrium and the maximum of $w$ in a long cheap talk equilibrium. 
As mentioned in the previous section, there are other notions of \emph{value of mediation} in the literature. For instance, given an information transmission game, Salamanca~\cite{salamanca2021value} defines the value of mediation as the maximum utility that the sender can get in a mediated equilibrium, and Corrao and Dai~\cite{corrao2023mediated} define it as the difference between the maximum utility that the sender can get in a mediated equilibrium and the maximum utility that the sender can get in a (one round) cheap talk equilibrium.

\section{Formal Definitions}\label{sec:formal-defs}

In this section we provide the definitions of the main concepts used throughout the paper.

A \emph{Bayesian game} is a tuple $\Gamma = (P, T, q, A, U)$ in which 
\begin{itemize}
\item $P = \{1,2,\ldots, n\}$ is the set of players.
\item $T = T_1 \times T_2 \times \ldots \times T_n$ is the set of possible type profiles (where $T_i$ is the set of possible types of player $i$).
\item $q \in \Delta T$ is a distribution of type profiles that is common knowledge between the players.
\item $A = A_1 \times A_2 \times \ldots \times A_n$ is the set of possible action profiles that players can play (where $A_i$ is the set of possible actions for player $i$).
\item $U = (u_1, \ldots, u_n)$ is an $n$-tuple of utility functions where $u_i : T \times A \longrightarrow \mathbb{R}$ outputs the utility of player $i$ given the types of the players and the action profile played.
\end{itemize}

A \emph{strategy} in $\Gamma$ for player $i$ is a map $\mu_i :
T_i \rightarrow \Delta(A_i)$ from types to actions. Intuitively, 
a strategy in a Bayesian game tells player $i$ how to choose its
action given its type. Since the distribution $q$ is common knowledge, the expected utility of player $i$ when players play strategy profile $\vec{\mu} = (\mu_1, \mu_2, \ldots, \mu_n)$ is given by 

\begin{equation}\label{eq:expected-payoff}
u_i(\vec{\mu}) = \sum_{t_i \in T_i} q(t_i) \sum_{\vec{t}}
q(\vec{t} \mid t_i) u_i(\vec{t}, \vec{\mu}(\vec{t})),
\end{equation}
where 
$u_i(\vec{t}, \vec{\mu}(\vec{t}))$ denotes the expected utility of player
$i$ when the type profile is $\vec{t}$ and the action profile 
is chosen
according to 
$(\mu_1(t_1), \ldots, \mu_n(t_n))$, and $q(\vec{t} \mid t_i)$ denotes the probability that the type profile is $\vec{t}$ given that player $i$'s type is $t_i$.
This allows us to define Bayesian Nash equilibrium as follows:

\begin{definition}\label{def:bayesian-nash}
In a Bayesian Game $\Gamma = (P,T, q, A, U)$, a strategy profile
$\vec{\mu} := (\mu_1, \ldots, \mu_n)$ 
is a \emph{(Bayesian) Nash equilibrium} if, 
for
all players $i$ and all strategies $u'_i$ for $i$,
$$u_i(\vec{\mu}) \ge u_i(\vec{\mu}_{-i}, \mu'_i).$$
\end{definition}

Intuitively, a strategy profile is a Nash equilibrium if no player can increase its expected payoff by switching to a different strategy.

\subsection{Games with Communication}\label{sec:com-games}

Given a Bayesian game $\Gamma = (P, T, q, A, U)$ (which we call the \emph{underlying game}), consider two extensions $\Gamma_{CT}$ and $\Gamma_d$ in which players can exchange messages before playing an action in $\Gamma$. In the \emph{long cheap talk extension} $\Gamma_{CT}$, players can freely communicate with other players, while in the \emph{mediator extension} $\Gamma_{d}$, players can only do so with a trusted third-party mediator.

Similarly to the model introduced by Aumann and Hart~\cite{aumann-hart-03}, both in $\Gamma_{CT}$ and $\Gamma_d$, players first have a \emph{communication phase} in which they can send and receive messages via private authenticated channels, and then they have an \emph{action phase} in which players must play an action in the underlying game $\Gamma$. The communication is \emph{synchronous} in both extensions. This means that the communication proceeds in \emph{rounds} and all messages sent at round $r$ are guaranteed to be received by the recipient at the beginning of round $r+1$. Intuitively, a round of communication $r$ for player $i$ should go as follows:
\begin{enumerate}
\item Player $i$ receives all messages sent to her at round $r-1$.
\item $i$ computes, for each player $j$, which messages $m_1^j, m_2^j, \ldots$ it should send to $j$.
\item $i$ sends $m_1^j, m_2^j, \ldots$ to $j$ (which will be received by $j$ at the beginning of round $r+1$).
\end{enumerate}

The local history $h_i$ of player $i$ consists of $i$'s type and the sequence of all messages sent and received by $i$ at each round, along with all the internal computations (including possible randomization) performed by player $i$. A strategy $\sigma_i$ for player $i$ in $\Gamma_{CT}$ or $\Gamma_d$ determines how $i$ should act during both the communication phase and the action phase. During the communication phase, $i$'s strategy tells $i$ which messages it should send to other players or to the mediator given its local history up to the current point, and also which internal computations $i$ should perform. Afterwards, during the action phase, $\sigma_i$ tells $i$ which action to play given its local history.

A strategy profile $\vec{\sigma} = (\sigma_1, \ldots, \sigma_n)$ for $\Gamma_{CT}$ (resp., $\vec{\sigma} + \sigma_d$ for $\Gamma_d$, where $\sigma_d$ is the mediator's strategy) induces a map $\vec{\mu}_{\vec{\sigma}} : T \longrightarrow \Delta(A_i)$ (resp., $\vec{\mu}_{\vec{\sigma} + \sigma_d} : T \longrightarrow \Delta(A_i)$)  where each type profile $\vec{t}$ is mapped to a distribution $p$ over action profiles determined by the likelihood of each possible outcome of the game when players $\vec{\sigma}$ (resp., $\vec{\sigma} + \sigma_d$) with type profile $\vec{t}$. More precisely, $p(\vec{a})$ is defined as the probability that players end up  playing action profile $\vec{a}$ during the action phase when they play $\vec{\sigma}$ (resp., $\vec{\sigma} + \sigma_d$) with type profile $\vec{t}$. We call $\vec{\mu}_{\vec{\sigma}}$ the \emph{outcome} of $\vec{\sigma}$.  The expected utility $u_i(\vec{\sigma})$ (resp., $u_i(\vec{\sigma} + \sigma_d)$) of player $i$ when players play strategy profile $\vec{\sigma}$ (resp., $\vec{\sigma} + \sigma_d$) is given by $u_i(\vec{\mu}_{\vec{\sigma}})$, which is computed as in Equation~\ref{eq:expected-payoff}.

As in Definition~\ref{def:bayesian-nash}, a strategy profile $\vec{\sigma}$ for $\Gamma_{CT}$ (resp., $\vec{\sigma} + \sigma_d$ for $\Gamma_d)$ is a Nash equilibrium if no player can increase its expected utility by switching to a different strategy. Note that, in $\Gamma_d$, the mediator has no incentives and thus can never increase its utility by defecting from the proposed strategy.

The following example illustrates all these concepts and shows how adding a communication phase may affect the utilities of the players.

\begin{example}
Consider a game $\Gamma = (P, T, q, A, U)$ where $P = \{1,2\}$, $T_1 = T_2 = A = \{0,1\}$, $q$ is the uniform distribution over $\{0,1\}^2$ and $$u_i(\vec{t}, \vec{a}) = \left\{
\begin{array}{ll}
1 & \text{if } a_i = t_{3-i}\\
0 & \text{otherwise}
\end{array}
\right.$$
for all $i \in \{1,2\}$. 

In $\Gamma$, all strategy profiles $(\mu_1, \mu_2)$ are Nash equilibria of $\Gamma$ that give $1/2$ utility to each player in expectation (note that, regardless of what player 1 plays, the probability that it matches $2$'s type is $1/2$). However, if we extend $\Gamma$ to $\Gamma_{CT}$ or $\Gamma_d$, there are additional strategies that give different outcomes. For instance, consider a strategy profile $\vec{\sigma}$ in $\Gamma_{CT}$ in which players send each other a message with their type in the first round and then play whatever they received in the second round. It is easy to check that this strategy is a Nash equilibrium of $\Gamma_{CT}$ that gives $1$ utility to each player.
\end{example}

\subsection{The Revelation Principle}

As a consequence of the revelation principle~\cite{myerson79, myerson1981optimal}, every Bayesian Nash equilibrium of $\Gamma_d$ or $\Gamma_{CT}$ can be implemented in $\Gamma_{CT}$ using very simple communication protocols. More precisely, given a Bayesian game $\Gamma = (P, T, q, A, U)$ and a function $\mu: T \longrightarrow \Delta A$, consider a strategy profile $\vec{\tau}^\mu + \tau_d^\mu$ in which the players and the mediator act as follows:
\begin{enumerate}
\item During the first round of communication, each player $i$ sends its type $t_i$ to the mediator.
\item During the second round, the mediator receives the types sent by the players during the first round, it samples $\vec{a} \longleftarrow \mu(\vec{t})$ and sends $a_i$ to each player $i$. If a player $i$ didn't send its type, the mediator acts as if $i$ sent some default value in $T_i$.
\item During the third round, each player $i$ plays action $a_i$, where $a_i$ is the value received from the mediator.
\end{enumerate}

Intuitively, in $\vec{\tau}^\mu + \tau_d^\mu$, all players send their types to the mediator, the mediator samples an action profile from $\mu(\vec{t})$, and then the players play whatever the mediator suggested. We denote $\vec{\tau}^\mu + \tau_d^\mu$ as the $\mu$-canonical strategy profile of $\Gamma_d$. We also say that a strategy profile is canonical if it is $\mu$-canonical for some $\mu$. The following two propositions are the result of applying the revelation principle to mediated communication:

\begin{proposition}\label{prop:canonical}
If $\Gamma = (P, T, q, A, U)$ is a Bayesian game and $\vec{\sigma} + \sigma_d$ is a Nash equilibrium of $\Gamma_d$, then $\vec{\tau}^{\mu_{\vec{\sigma} + \sigma_d}} + \tau_d^{\mu_{\vec{\sigma} + \sigma_d}}$ is also a Nash equilibrium of $\Gamma_d$ that induces the same outcome.
\end{proposition}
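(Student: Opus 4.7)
The plan is the standard revelation-principle argument: first observe that the canonical strategy reproduces the outcome of $\vec{\sigma}+\sigma_d$ by construction, and then show that any deviation in the canonical protocol can be \emph{simulated} by a suitable deviation in the original protocol, so profitable deviations in the former would contradict the equilibrium property of the latter.

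For the outcome equality, I would argue as follows. Let $\mu := \mu_{\vec{\sigma}+\sigma_d}$ denote the outcome map induced by the original equilibrium. In the canonical profile $\vec{\tau}^\mu+\tau_d^\mu$, the mediator, upon receiving the truthful reports $\vec{t}$, samples $\vec{a}\longleftarrow \mu(\vec{t})$ and each player $i$ plays the received $a_i$. Hence the induced outcome is exactly $\mu$, matching that of $\vec{\sigma}+\sigma_d$. Since expected utilities depend only on the outcome (Equation~\ref{eq:expected-payoff}), the two profiles yield the same expected payoff to every player.

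For the equilibrium property, suppose for contradiction that some player $i$ of type $t_i$ has a strictly profitable deviation $\tau'_i$ in $\Gamma_d$ against $\vec{\tau}^\mu_{-i}+\tau_d^\mu$. Such a $\tau'_i$ consists of (i) a (possibly randomized) type $t'_i$ reported to the mediator and (ii) an action rule that, given the action $a_i$ suggested by the mediator (and possibly $i$'s own randomness), chooses which action in $A_i$ to play. I would then construct a deviation $\sigma'_i$ for player $i$ in the original game $\Gamma_d$ that simulates $\tau'_i$ as follows: internally draw $t'_i$ as prescribed by $\tau'_i$, run $\sigma_i$ during the communication phase \emph{as if} $i$'s true type were $t'_i$, and at the action phase, compute the action $a_i$ that $\sigma_i$ would have played with type $t'_i$, then play the action selected by $\tau'_i$'s action rule on input $a_i$. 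Because the mediator and the other players $\vec{\sigma}_{-i}+\sigma_d$ see messages from $i$ drawn from exactly the same distribution as they would from a genuine type-$t'_i$ player following $\sigma_i$, the joint distribution over $(\vec{t}_{-i},\vec{a})$ that $\sigma'_i$ induces coincides with the one $\tau'_i$ induces against the canonical profile (with $\mu = \mu_{\vec{\sigma}+\sigma_d}$ acting as the conditional distribution of $a_i$ suggested by the mediator). Consequently the expected utility of $i$ from $\sigma'_i$ in the original game equals that from $\tau'_i$ in the canonical game, which by assumption is strictly larger than $u_i(\vec{\sigma}+\sigma_d)$. This contradicts the fact that $\vec{\sigma}+\sigma_d$ is a Nash equilibrium of $\Gamma_d$.

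The main obstacle is the bookkeeping in the simulation step: one has to verify that $\sigma'_i$ indeed reproduces, against the honest mediator $\sigma_d$ and the honest opponents $\vec{\sigma}_{-i}$, the same joint distribution over types, mediator suggestions, and final actions as $\tau'_i$ does against $\vec{\tau}^\mu_{-i}+\tau_d^\mu$. The crucial point is that, because the mediator's message to $i$ in the canonical profile is distributed as the $i$-th component of $\mu(t'_i,\vec{t}_{-i})$ (by definition of $\mu$ and truthful reporting by the other players), this is exactly the distribution of the action $i$ would have taken in the original equilibrium had its type been $t'_i$; once this is checked, the expected-utility equality and the contradiction follow immediately.
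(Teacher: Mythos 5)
Your proof is correct and is exactly the standard revelation-principle simulation argument: the paper itself gives no explicit proof of this proposition, deferring to the citations of Myerson, and your write-up (including the key bookkeeping point that the mediator's suggestion $a_i$ in the canonical profile and the action $\sigma_i$ would have played under the misreported type $t'_i$ have the same joint distribution with the opponents' actions, by the definition of $\mu_{\vec{\sigma}+\sigma_d}$) is precisely the argument those citations encapsulate.
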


\begin{proposition}~\label{prop:canonical2}
If $\Gamma = (P, T, q, A, U)$ is a Bayesian game and $\vec{\sigma}$ is a Nash equilibrium of $\Gamma_{CT}$, then $\vec{\tau}^{\mu_{\vec{\sigma}}} + \tau_d^{\mu_{\vec{\sigma}}}$ is a Nash equilibrium of $\Gamma_d$ that induces the same outcome.
\end{proposition}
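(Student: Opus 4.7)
The plan is to exhibit, for every profitable deviation in $\Gamma_d$ from the canonical profile $\vec{\tau}^{\mu_{\vec{\sigma}}} + \tau_d^{\mu_{\vec{\sigma}}}$, a corresponding profitable deviation from $\vec{\sigma}$ in $\Gamma_{CT}$, which would contradict that $\vec{\sigma}$ is a Nash equilibrium.

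First, I would observe that by the definition of the canonical profile, when all players play honestly the mediator samples $\vec{a} \sim \mu_{\vec{\sigma}}(\vec{t})$ and each player $i$ plays $a_i$, so the induced outcome is exactly $\mu_{\vec{\sigma}}$. In particular, each player's expected utility under $\vec{\tau}^{\mu_{\vec{\sigma}}}+\tau_d^{\mu_{\vec{\sigma}}}$ equals its expected utility under $\vec{\sigma}$ in $\Gamma_{CT}$, so the outcome claim is immediate and only the equilibrium condition requires work.

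Second, consider a putative deviation $\tau_i'$ of player $i$ in $\Gamma_d$. Since the canonical mediator ignores anything beyond a single reported type and the suggested action is drawn from $\mu_{\vec{\sigma}}(t_i', t_{-i})$, without loss of generality $\tau_i'$ consists of (i) a possibly randomized choice of a reported type $t_i' \in T_i$ as a function of the true type $t_i$, and (ii) an action rule $f_i(t_i, t_i', a_i)$ selecting what to play in the underlying game. I would mirror this in $\Gamma_{CT}$ by a strategy $\sigma_i^{CT}$ in which player $i$ samples $t_i'$ from the same distribution, runs $\sigma_i$ pretending its type is $t_i'$, and when the action phase arrives computes the action $a_i$ that $\sigma_i(t_i', \cdot)$ would prescribe given the realized transcript and then plays $f_i(t_i, t_i', a_i)$ instead. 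Because $\vec{\sigma}$ induces outcome $\mu_{\vec{\sigma}}$ in $\Gamma_{CT}$, running $\sigma_i$ with fake type $t_i'$ against $\sigma_{-i}$ with the true types $t_{-i}$ yields the action distribution $\mu_{\vec{\sigma}}(t_i', t_{-i})$, which is exactly the distribution the canonical mediator samples from after receiving $t_i'$; hence the joint outcome distribution (over action profiles, conditional on the true type profile) induced by $(\sigma_i^{CT}, \sigma_{-i})$ in $\Gamma_{CT}$ matches that induced by $(\tau_i', \tau_{-i}^{\mu_{\vec{\sigma}}}) + \tau_d^{\mu_{\vec{\sigma}}}$ in $\Gamma_d$, so $i$'s expected utilities agree. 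A strictly profitable $\tau_i'$ therefore yields a strictly profitable $\sigma_i^{CT}$, contradicting that $\vec{\sigma}$ is a Nash equilibrium.

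The main obstacle is justifying the \emph{without loss of generality} step: one must argue that any deviation in $\Gamma_d$ against the canonical mediator can be reduced to the simple form (i)--(ii). This reduction is valid because the canonical mediator's output depends only on a single message interpreted as a type (with a fixed default for any other input and no feedback loop before the action phase), and so extra messages, silence, or randomization in the signalling stage collapse into a distribution over reported types together with a post-advice action rule. Once this normalization is in hand, the simulation in the previous paragraph is entirely routine.
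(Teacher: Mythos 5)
Your proof is correct: the outcome identity is immediate from the construction of the canonical profile, the normalization of deviations against the canonical mediator into a (randomized) type report plus a post-recommendation action rule is valid, and the simulation of such a deviation inside $\Gamma_{CT}$ by running $\sigma_i$ on the fake type and overriding the final action does reproduce the same joint distribution over action profiles, so a profitable deviation in $\Gamma_d$ would contradict the equilibrium property of $\vec{\sigma}$. The paper gives no explicit proof of this proposition and simply invokes the revelation principle; your argument is exactly the standard simulation that the citation stands for, so it is essentially the same approach, just spelled out.
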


Propositions~\ref{prop:canonical} and \ref{prop:canonical2} state that all equilibria in $\Gamma_d$ and $\Gamma_{CT}$ can be implemented using canonical strategies. In particular, this means that if we want to find specific equilibria in $\Gamma_d$ that satisfy certain properties, we can always restrict our search to canonical strategy profiles.

\subsection{Information Transmission Games}\label{sec:persuasion-games}

Information transmission games are a special type of Bayesian games for two players $s$ and $r$ in which the utilities of both players depend uniquely on the type of $s$ and the action played by $r$. Players $s$ and $r$ are typically called \emph{sender} and \emph{receiver}. Intuitively, information transmission games can be seen as scenarios in which the sender has complete knowledge of the game state and the receiver must make a decision with the sender's help. In both the mediator and the long cheap talk extensions of $\Gamma$, the sender must try to influence the behavior of the receiver by strategically disclosing information about her type. For instance, as in Kamenica and Gentzkow's example~\cite{kamenica2011bayesian}, the receiver could be a judge that has to decide if a certain person is guilty or not, and the sender could be the prosecutor of the accused. In this example, the prosecutor's aim is to convince the judge that the accused is guilty.



\subsection{Welfare and the Value of Mediation}

We extend Ashlagi, Monderer and Tennenholtz's definition of the \emph{value of mediation}~\cite{ashlagi2008value} as follows. Let $w : T \times A \rightarrow \mathbb{R}^+$ be a function that takes as input a type profile $\vec{t}$ and an action profile $\vec{a}$ and outputs the total welfare of the system given that players had type $\vec{t}$ and played $\vec{a}$. Given a strategy profile $\vec{\sigma}$ for a Bayesian game $\Gamma$, denote by $w(\vec{\sigma})$ the expected welfare when players play $\vec{\sigma}$, which is computed in the same way as in Equation~\ref{eq:expected-payoff}. Moreover, let $NE^{w}(\Gamma)$ denote the maximum expected value of $w$ on a Nash Equilibrium of $\Gamma$ and let $NE^{w}(\Gamma_{CT})$ and $NE^{w}(\Gamma_d)$ be defined analogously. Given the welfare function $w$, we define the \emph{value of mediation} of $\Gamma$ (denoted by $M^w(\Gamma)$) as the ratio between the maximum welfare that players can achieve in equilibrium with the mediator and the maximum welfare they can reach in equilibrium by communicating without the mediator. More precisely, $$M^w(\Gamma) := \frac{NE^{w}(\Gamma_{d})}{NE^{w}(\Gamma_{CT})}.$$

In short, $M^w$ compares the welfare that the system can get with a mediator as opposed to the welfare that the system can get by allowing players to communicate with long cheap talk. It is important to note that the definition above is not well defined when $NE^{w}(\Gamma_{CT}) = 0$. Thus, we extend the definition as follows: if $NE^{w}(\Gamma_{CT}) = 0$, then $$M^w(\Gamma) = \left\{
\begin{array} {ll}
1 & \text{if } NE^{w}(\Gamma_{d}) = 0\\
+\infty & \text{otherwise.}
\end{array}\right.
$$

It is important to note that, by Proposition~\ref{prop:canonical2}, $NE^{w}(\Gamma_{d}) \ge NE^{w}(\Gamma_{CT}) \ge 0$ for all positive welfare functions $w$. This means that $$M^w(\Gamma) \ge 1$$ for all positive welfare functions $w$.

\section{Main Results}

In this section we state our main results regarding the value of mediation for different types of information transmission games. Our first result states that the mediator cannot improve the utility of the sender or the utility of the receiver if the game has binary actions. More generally, the following theorem states that the mediator cannot improve any welfare function $w$ that is \emph{monotone}  over $u_r$ and $u_s$, which is a function such that $$(u_r(\omega, a) \ge u_r(\omega', a') \quad \text{and} \quad u_s(\omega, a) \ge u_s(\omega', a')) \quad \Longrightarrow \quad w(\omega, a) \ge w(\omega', a').$$

\begin{theorem}\label{thm:persuasion-binary-core}
If $\Gamma = (P, T, q, A, U)$ is an information transmission game such that $|A_r| \le 2$ and $w$ is any positive welfare function for $\Gamma$ that is monotone over $u_r$ and $u_s$, then $$M^{w}(\Gamma) = 1.$$
\end{theorem}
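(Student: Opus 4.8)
The plan is to establish the two inequalities $NE^{w}(\Gamma_d) \ge NE^{w}(\Gamma_{CT})$ and $NE^{w}(\Gamma_d) \le NE^{w}(\Gamma_{CT})$, which together give $M^{w}(\Gamma)=1$ (the boundary case $NE^{w}(\Gamma_{CT})=0$ is then automatic, since equality forces $NE^{w}(\Gamma_d)=0$ as well). The first inequality is immediate from Proposition~\ref{prop:canonical2}, so the entire content is the reverse inequality: every welfare level attainable in a mediated equilibrium is already attainable in long cheap talk. To set up coordinates, note that if $|A_r|=1$ the claim is trivial, so assume $A_r=\{0,1\}$, and write $d_s(\omega):=u_s(\omega,1)-u_s(\omega,0)$ and $d_r(\omega):=u_r(\omega,1)-u_r(\omega,0)$. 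By Proposition~\ref{prop:canonical}, $NE^{w}(\Gamma_d)$ is attained by a $\mu$-canonical equilibrium; since payoffs do not depend on the sender's action, such an equilibrium is summarized by numbers $x_\omega\in[0,1]$, the probability that the receiver is told to play $1$ when the sender reports type $\omega$.

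I would first read off the two families of constraints in these coordinates. Sender incentive compatibility (truthful reporting) is $(x_\omega-x_{\omega'})\,d_s(\omega)\ge 0$ for all $\omega,\omega'$, which forces a pooling structure: every type with $d_s(\omega)>0$ shares the common value $X^{+}:=\max_\omega x_\omega$, every type with $d_s(\omega)<0$ shares $X^{-}:=\min_\omega x_\omega$, and types with $d_s(\omega)=0$ are otherwise unconstrained. Receiver obedience is $\sum_\omega q(\omega)\,x_\omega\,d_r(\omega)\ge 0$ whenever action $1$ is recommended with positive probability, together with the symmetric inequality $\sum_\omega q(\omega)\,(1-x_\omega)\,d_r(\omega)\le 0$ for action $0$.

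The heart of the argument is a derandomization lemma. The welfare $w(\mu)=\sum_\omega q(\omega)\big[(1-x_\omega)\,w(\omega,0)+x_\omega\,w(\omega,1)\big]$ is linear in the vector $(x_\omega)$ and the constraints cut out a polytope, so the maximum is attained at a vertex; I would show that, because $w$ is monotone over $u_r$ and $u_s$, there is an optimal vertex with $x_\omega\in\{0,1\}$ for all $\omega$. Monotonicity is what aligns the welfare's action preference $\delta_w(\omega):=w(\omega,1)-w(\omega,0)$ with the players': for a sender-indifferent type ($d_s=0$) it yields $\mathrm{sign}\,\delta_w=\mathrm{sign}\,d_r$, and for a type on which $d_s$ and $d_r$ share a sign it gives $\delta_w$ that same sign. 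The claim I would prove is that, within each sender-pooled group, the direction in which welfare wants to move the common probability ($X^{+}$ or $X^{-}$) is a direction that also relaxes the relevant obedience inequality; hence one can round each pooled group and each $d_s=0$ type to its welfare-preferred action without violating obedience and without decreasing $w$, reaching a deterministic equilibrium of weakly larger welfare.

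Finally, a deterministic mediated equilibrium is just a partition $\Omega=\Omega_0\sqcup\Omega_1$ into the types told to play $0$ and $1$; the pooling structure forces $\{\omega:d_s(\omega)>0\}\subseteq\Omega_1$ and $\{\omega:d_s(\omega)<0\}\subseteq\Omega_0$ (on whichever side is nonempty), while obedience reads $\sum_{\omega\in\Omega_1}q(\omega)d_r(\omega)\ge0$ and $\sum_{\omega\in\Omega_0}q(\omega)d_r(\omega)\le0$. I would implement it in $\Gamma_{CT}$ by the one-round protocol in which the sender announces which block contains her type and the receiver best-responds to the induced posterior (answering off-path announcements with the prior best response). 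Obedience is exactly the receiver's best-response condition, and the block structure makes every type announce a block whose action it weakly prefers, so no sender type gains by deviating; this is therefore a long cheap talk equilibrium with the same outcome, giving $NE^{w}(\Gamma_{CT})\ge NE^{w}(\Gamma_d)$ and hence $M^{w}(\Gamma)=1$. I expect the main obstacle to be the derandomization lemma: making precise that, for monotone $w$, the welfare gradient on each sender-pooled group points in an obedience-feasible direction even accounting for the coupling of groups through the two obedience inequalities. This is exactly the step that fails for non-monotone $w$ (where a mediator can profitably pool a conflict type with an agreement type), and it is where both $|A_r|\le 2$ — which makes binding obedience equivalent to a single indifference condition per recommendation — and the monotonicity of $w$ are genuinely used.
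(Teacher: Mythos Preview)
Your high-level plan matches the paper's: characterize mediated equilibria via sender-pooling and the two obedience inequalities, argue that a $w$-optimal mediated equilibrium can be taken deterministic, and then implement it in $\Gamma_{CT}$ by a one-shot announcement. The setup and the final implementation step are exactly what the paper does.

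Where you diverge is the derandomization. The paper does not use a vertex/LP argument. It introduces the aggregates $B_0=\sum_{d_s(\omega)<0}q(\omega)(u_r(\omega,0)-u_r(\omega,1))$ and $B_1=\sum_{d_s(\omega)>0}q(\omega)(u_r(\omega,0)-u_r(\omega,1))$ and splits into two cases. If $B_0<0$ or $B_1>0$ (non-aligned), a short computation with the obedience inequality shows that \emph{every} mediated equilibrium is constant, hence already a cheap-talk equilibrium for any $w$; monotonicity is not even needed here. If $B_0\ge0$ and $B_1\le0$ (aligned), the paper shows that the fully revealing outcome $(X^+,X^-)=(1,0)$ is a mediated equilibrium that simultaneously maximizes both $u_s$ and $u_r$ over the entire equilibrium set, and monotonicity of $w$ is invoked only at this Pareto-dominance step; that outcome is then implemented by having the sender announce her preferred action.

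Your proposed mechanism for derandomization---that for monotone $w$ the welfare-improving direction on each pooled variable also relaxes obedience---does not hold. On the $X^-$ pool (types with $d_s<0$), moving $X^-$ upward \emph{tightens} both obedience constraints in the aligned case (the relevant derivatives are $\pm B_0$ with the wrong sign). Yet monotonicity of $w$ does not fix the sign of $\sum_{d_s(\omega)<0}q(\omega)\,\delta_w(\omega)$: a type with $d_s<0$ and $d_r>0$ has its two actions incomparable in the $(u_s,u_r)$ order, so $\delta_w$ is unconstrained there, and with enough such conflict types the aggregate welfare gradient can point toward larger $X^-$. So ``welfare direction relaxes obedience'' is false in general, and your sketch does not establish the lemma. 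What rescues the aligned case is not a local gradient alignment but the global Pareto-dominance of $(X^+,X^-)=(1,0)$ in both players' expected utilities; that is the structural fact your argument is missing, and it is exactly where the paper simultaneously uses $|A_r|\le 2$ (to reduce obedience to the single scalar pair $(B_0,B_1)$) and monotonicity of $w$.
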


Theorem~\ref{thm:persuasion-binary-core} requires that $|A_r| \le 2$ and applies to all positive welfare functions that are monotone over $u_r$ and $u_s$. In particular, it applies to $u_r$ and $u_s$ themselves  and all positive linear combinations of these. We show next that these constraints are tight, which means that there exist information transmission games with three (or more) actions and welfare functions that are not monotone over $u_r$ and $u_s$ such that Theorem~\ref{thm:persuasion-binary-core} does not hold. Regarding the monotonicity, consider the following example:

\begin{example}\label{example:non-monotone}
Let $\Gamma = (P, T, q, A, U)$ be an information transmission game such what $T_s = \{t_0, t_1\}$, $A_r = \{0,1\}$, $q$ is the uniform distribution over $\{t_0,t_1\}$, and the utilities of the sender and the receiver are defined as follows:

$$\begin{array}{c|cc}
u_s & 0 & 1\\
\hline
t_0 & 1 & 0\\
t_1 & 0 & 1\\
\end{array}
\quad  \quad \quad 
\begin{array}{c|cc}
u_r & 0 & 1\\
\hline
t_0 & 1 & 0\\
t_1 & \frac{1}{2} & 1\\
\end{array}
$$

Consider also a welfare function $w$ defined by the following table

$$\begin{array}{c|cc}
w & 0 & 1\\
\hline
t_0 & 0 & 1\\
t_1 & 0 & 0\\
\end{array}
$$
\end{example}

Intuitively, $\Gamma$ is an information transmission game with two states and two actions in which players have similar preferences. However, the welfare function $w$ is positive in the only combination of state and action in which none of the players get any utility. It is possible to show that the receiver would never play action $1$ on state $t_0$ if players could communicate with long cheap-talk. However, such a combination is possible with a mediator. This implies the following result, which is proven in Section~\ref{sec:proof-2}.

\begin{proposition}\label{prop:example1}
Let $\Gamma$ and $w$ be the information transmission game and welfare function of Example~\ref{example:non-monotone}. Then, $$M^{w}(\Gamma) = +\infty.$$
\end{proposition}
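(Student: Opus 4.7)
The proof splits into two directions: exhibit a mediated equilibrium with $w > 0$ (giving $NE^{w}(\Gamma_d) > 0$), and show every long cheap talk equilibrium has $w = 0$ (giving $NE^{w}(\Gamma_{CT}) = 0$). By the extended definition of $M^w$ in the case $NE^{w}(\Gamma_{CT}) = 0$, the two together yield $M^{w}(\Gamma) = +\infty$.

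For the mediated side, I would take the canonical strategy profile $\vec{\tau}^{\mu} + \tau_d^{\mu}$ where $\mu(t_0)$ is uniform on $\{0,1\}$ and $\mu(t_1)$ puts mass $1$ on action $1$. Both sender IC checks are immediate: type $t_0$ truth-tells for expected utility $1/2$ versus $0$ by misreporting, and type $t_1$ truth-tells for utility $1$ versus $1/2$ by misreporting. Both receiver IC checks are equally immediate: the recommendation $0$ is only ever issued on type $t_0$, so the receiver's posterior is degenerate and action $0$ is strictly optimal; the recommendation $1$ induces Bayesian posterior $1/3$ on $t_0$, at which action $1$ is a best response. The expected welfare equals $(1/2)(1/2) \, w(t_0, 1) = 1/4 > 0$.

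For the cheap talk side, fix any Nash equilibrium $\vec{\sigma}$ of $\Gamma_{CT}$, write $\beta(h)$ for the receiver's probability of playing $0$ after communication transcript $h$, and set $p_0 := \Pr(a = 0 \mid t_0)$ and $p_1 := \Pr(a = 0 \mid t_1)$. A direct calculation on $u_r$ shows that the receiver's best response is strictly action $0$ whenever her posterior on $t_0$ exceeds $1/3$. Suppose toward contradiction that $p_0 < 1$. Since the sender of type $t_0$ strictly prefers $0$, her best-response condition, combined with the symmetric analysis for type $t_1$ (strengthened by the imitation-based IC inequality $p_1 \le p_0 < 1$), rules out any on-path transcript with $\beta(h) = 1$: otherwise the sender of type $t_0$ could deviate to a pure sender strategy targeting that transcript and secure $\beta = 1 > p_0$, contradicting best response. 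Consequently every on-path transcript has $\beta(h) < 1$, so its posterior on $t_0$ is at most $1/3$. But Bayes plausibility forces the expected on-path posterior to equal the prior $1/2 > 1/3$, a contradiction. Hence $p_0 = 1$ and, because $w$ is supported only on $(t_0, 1)$, $w(\vec{\sigma}) = 0$.

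The main obstacle is the ``no on-path transcript has $\beta(h) = 1$'' step of the cheap talk argument: constructing a pure sender deviation that deterministically targets such an $h^{*}$ requires care when the receiver's equilibrium strategy is mixed, since the receiver's messages during the communication phase may be stochastic. The natural route is to observe that the receiver's randomization can only occur at information sets where she is indifferent (posterior exactly $1/3$), so the sender of type $t_0$ can always steer the protocol toward the high-$\beta$ subtree by fixing her own messages appropriately, thereby reaching the posterior-above-$1/3$ transcript with probability arbitrarily close to $1$. Once this piece is rigorously in place, the Bayes-plausibility argument and the receiver's $1/3$-threshold structure close the proof and deliver $M^{w}(\Gamma) = +\infty$.
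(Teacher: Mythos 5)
Your mediated-side argument is correct and is exactly the equilibrium the paper uses ($\mu(t_0)$ uniform, $\mu(t_1)$ degenerate on $1$, welfare $1/4$), and your overall two-part structure ($NE^w(\Gamma_d)>0$, $NE^w(\Gamma_{CT})=0$) matches the paper's. The receiver's $1/3$-threshold computation and the final Bayes-plausibility contradiction are also fine \emph{conditional on} your step (a), that no on-path transcript has $\beta(h)=1$ when $p_0<1$.

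That step is where there is a genuine gap, and the patch you propose does not work. You justify the sender's ability to ``steer'' toward a high-$\beta$ transcript by asserting that the receiver randomizes her communication-phase messages only at information sets where she is indifferent, i.e., where her posterior is exactly $1/3$. But that indifference condition governs only the receiver's \emph{action} choice; messages in the communication phase are cheap talk and never enter $u_r$, so the receiver is trivially indifferent among all message choices at \emph{every} information set and may randomize her messages anywhere in equilibrium. Consequently a type-$t_0$ sender cannot in general force play into a particular transcript $h^*$ with $\beta(h^*)=1$: the receiver's message lottery may route play into other subtrees regardless of what the sender sends, and your deviation only secures some mixture over terminal transcripts, not $\beta=1$. (A toy illustration: the receiver publicly randomizes in round one and commits to different continuations on each branch; transcripts on one branch are unreachable by any sender deviation on the other.) The paper avoids this obstacle by working with the receiver's posterior process round by round rather than targeting a terminal transcript: it observes that the receiver's own messages are measurable with respect to her own history and hence cannot move her posterior about the sender's type, so any increase of $\Pr[t_1\mid h_r]$ from $1/2$ to the required $2/3$ must be caused by a sender message, and the type-$t_0$ sender can always instead send a message that does not raise that posterior. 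To close your proof you would need either to import that round-by-round argument or to find a genuinely different justification for step (a); as written, the ``steering'' claim is unsupported.
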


Note that, given a Bayesian game $\Gamma$ and a welfare function $w$ such that $M^{w}(\Gamma) = +\infty$, we can define another welfare function $w' := w + \epsilon$ that gives $\epsilon$ more utility on all inputs. By setting $\epsilon$ to arbitrarily small values, we can construct new welfare functions that produce arbitrarily high (non-infinite) mediation values. This, in addition to Proposition~\ref{prop:example1}, gives the following corollary.

\begin{corollary}
For all $N > 0$, there exist information transmission games $\Gamma = (P, T, q, A, U)$ with $|A_r| = 2$ and positive welfare functions $w$ for $\Gamma$ such that 
$$M^{w}(\Gamma)  > N.$$
\end{corollary}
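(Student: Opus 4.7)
The plan is to derive the corollary directly from Proposition~\ref{prop:example1} by perturbing the welfare function from Example~\ref{example:non-monotone} additively by a small constant. The informal remark just above the corollary already sketches this idea; I will turn it into a short computation. Concretely, I take the same game $\Gamma$ and the same welfare function $w$ as in Example~\ref{example:non-monotone}, and define a new welfare function $w' := w + \epsilon$ for a small $\epsilon > 0$ that I will choose at the end. Since $w$ is nonnegative and $\epsilon > 0$, $w'$ is strictly positive, so $M^{w'}(\Gamma)$ is defined by the ratio formula (no need to invoke the $+\infty$ convention).

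The key algebraic step is to observe that shifting $w$ by a constant shifts the expected welfare of \emph{every} strategy profile by the same constant. Indeed, using Equation~\ref{eq:expected-payoff} and the linearity of expectation, for any strategy profile $\vec{\sigma}$ (with or without a mediator) we have $w'(\vec{\sigma}) = w(\vec{\sigma}) + \epsilon$. In particular, the argmax over Nash equilibria does not change, and
\begin{equation*}
NE^{w'}(\Gamma_{CT}) = NE^{w}(\Gamma_{CT}) + \epsilon, \qquad NE^{w'}(\Gamma_d) = NE^{w}(\Gamma_d) + \epsilon.
\end{equation*}
From Proposition~\ref{prop:example1} and the extended definition of $M^w$, the fact that $M^w(\Gamma) = +\infty$ on the example game forces $NE^{w}(\Gamma_{CT}) = 0$ while $NE^{w}(\Gamma_d) > 0$. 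Substituting gives
\begin{equation*}
M^{w'}(\Gamma) = \frac{NE^{w}(\Gamma_d) + \epsilon}{\epsilon} = 1 + \frac{NE^{w}(\Gamma_d)}{\epsilon}.
\end{equation*}

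Finally, given any $N > 0$, I just pick $\epsilon$ small enough that the ratio exceeds $N$, e.g.\ any $\epsilon$ with $0 < \epsilon < NE^{w}(\Gamma_d)/N$ works (and for $N \le 1$ any $\epsilon > 0$ already suffices since $NE^{w}(\Gamma_d) > 0$). This yields $M^{w'}(\Gamma) > N$, and since $\Gamma$ is the game from Example~\ref{example:non-monotone} with $|A_r| = 2$, the required pair $(\Gamma, w')$ exists. The only nontrivial ingredient is Proposition~\ref{prop:example1}; everything after it is a one-line perturbation argument, so I do not anticipate any obstacle beyond being careful that the $\epsilon$-shift preserves the Nash equilibrium set (which it does, because the shift does not enter any player's utility $u_s, u_r$, only the external welfare function).
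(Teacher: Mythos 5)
Your proposal is correct and follows exactly the route the paper itself takes: the paper's ``proof'' of this corollary is precisely the informal remark preceding it, namely perturbing the welfare function of Example~\ref{example:non-monotone} to $w' := w + \epsilon$ and letting $\epsilon \to 0$, and you have merely filled in the (correct) one-line computation $M^{w'}(\Gamma) = 1 + NE^{w}(\Gamma_d)/\epsilon$ together with the observation that the constant shift leaves the equilibrium sets untouched.
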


Regarding the bound on the number of actions of Theorem~\ref{thm:persuasion-binary-core}, consider the following example.

\begin{example}\label{example:three-actions}
Let $\Gamma = (P, T, q, A, U)$ be an information transmission game such what $T_s = \{t_0, t_1, t_2\}$, $A_r = \{0,1,2\}$, $q$ is the uniform distribution over $\{t_0, t_1, t_2\}$, and the utilities of the sender and the receiver are defined as follows:

$$\begin{array}{c|ccc}
u_s & 0 & 1 & 2\\
\hline
t_0 & 1 & 0 & 0\\
t_1 & 0 & 1 & 0\\
t_2 & 0 & 0 & 1\\
\end{array}
\quad  \quad \quad 
\begin{array}{c|ccc}
u_r & 0 & 1 & 2\\
\hline
t_0 & 0 & 1 & 0\\
t_1 & 0 & 0 & 1\\
t_2 & 1 & 0 & 0\\
\end{array}
$$

Consider also a $\mu$-canonical strategy profile $\vec{\tau}^\mu + \tau_d^\mu$ for $\Gamma_d$, where $\mu$ is the map that maps each state $t_i$ to the uniform distribution over $\{t_i, t_{i+1}\}$ (using the convention that $t_{2+1} = t_0$).
\end{example}

In this example there are three actions and three states, and the action preferred in each state is different for each player. The outcome $\mu$ of the proposed strategy selects the preferred action of the sender and that of the receiver with equal probability. We can show that $\vec{\tau}^\mu + \tau_d^\mu$ is a Nash equilibrium of $\Gamma_d$ that strictly Pareto-dominates all Nash equilibria in $\Gamma_{CT}$:

\begin{proposition}\label{prop:example2}
Let $\Gamma$ and $\mu$ be the information transmission game and outcome of Example~\ref{example:three-actions}. Then, $\vec{\tau}^\mu + \tau_d^\mu$ is a Nash equilibrium of $\Gamma_d$ such that, for all strategy profiles $\vec{\sigma}$ in $\Gamma_{CT}$, $$u_s(\vec{\sigma}) < u_s(\vec{\tau}^\mu + \tau_d^\mu) \quad \quad \mbox{and} \quad \quad u_r(\vec{\sigma}) < u_r(\vec{\tau}^\mu + \tau_d^\mu).$$
\end{proposition}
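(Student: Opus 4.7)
The plan is to handle the two claims of the proposition separately: first, verify that $\vec{\tau}^\mu + \tau_d^\mu$ is a Nash equilibrium of $\Gamma_d$ attaining $u_s = u_r = 1/2$; and second, show that every Nash equilibrium of $\Gamma_{CT}$ yields exactly $u_s = u_r = 1/3$, which is strictly less.

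For the first part I would use the canonical mediator description of $\vec{\tau}^\mu + \tau_d^\mu$. If the sender at type $t_i$ reports truthfully, the mediator samples uniformly from $\{i, i+1\}$, so her expected utility is $\frac{1}{2} u_s(t_i, i) + \frac{1}{2} u_s(t_i, i+1) = \frac{1}{2}$; reporting $t_{i+1}$ and $t_{i-1}$ instead yield $0$ and $\frac{1}{2}$ respectively, so truth-telling is weakly optimal. By the cyclic symmetry of $\mu$, the posterior conditional on the mediator recommending action $a$ is uniform on $\{t_a, t_{a-1}\}$; under this posterior both $a$ and $a+1$ yield the receiver expected utility $\frac{1}{2}$ while $a-1$ yields $0$, so obeying is weakly optimal. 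This both establishes the equilibrium and gives $u_s(\vec{\tau}^\mu + \tau_d^\mu) = u_r(\vec{\tau}^\mu + \tau_d^\mu) = \frac{1}{2}$.

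For the second part, the strategy is to show that every Nash equilibrium of $\Gamma_{CT}$ is outcome-equivalent to a babbling equilibrium. Fix a Nash equilibrium $\vec{\sigma}$ of $\Gamma_{CT}$ and let $p_t^a$ denote the induced probability that the receiver plays action $a$ when the sender has type $t$. Cheap-talk incentive compatibility for the sender says that along every on-path message $s$ sent by type $t_j$, the conditional expected payoff $\sum_a \rho(a \mid s) u_s(t_j, a) = \rho(j \mid s)$ equals her equilibrium payoff $p_{t_j}^j$, so $\rho(j \mid s)$ is constant across $t_j$'s support. Together with receiver obedience---action $a$ can appear on $s$ only if $P(t_{a-1} \mid s) \geq P(t_j \mid s)$ for all $j$, equivalently $P(s \mid t_{a-1}) \geq P(s \mid t_j)$ since the prior is uniform---this implies that whenever $p_{t_j}^j > 0$, every on-path message $s$ used by $t_j$ satisfies $P(s \mid t_{j-1}) \geq P(s \mid t_j)$. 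Summing over $s$ in $t_j$'s support forces the sum to equal $1$, so $t_{j-1}$'s support is contained in $t_j$'s and the two conditional distributions coincide there. Cycling through $j = 0, 1, 2$ collapses the three signal distributions, and every on-path posterior equals the prior; the cases in which some $p_{t_j}^j = 0$ are handled analogously, using that IC then forces $\rho(j \mid s) = 0$ on every on-path message, so action $j$ is simply never played. Under the prior all three actions are tied best responses for the receiver, so his strategy is a type-independent mixture $(\rho_0, \rho_1, \rho_2)$, and a one-line calculation gives $u_s(\vec{\sigma}) = \frac{1}{3}(\rho_0 + \rho_1 + \rho_2) = \frac{1}{3}$ and $u_r(\vec{\sigma}) = \frac{1}{3}(\rho_1 + \rho_2 + \rho_0) = \frac{1}{3}$.

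The main technical obstacle lies in the reduction-to-babbling step: translating the combined sender-IC and receiver-obedience inequalities into the support-coincidence and equal-conditional-distribution conclusion via the summing argument, and separately ruling out off-path deviations by pinning the receiver's off-path responses to actions other than the deviating sender's preferred one (which is always achievable, since any pure type is a consistent belief, and the best response to a belief concentrated on $t_j$ is $j+1$, not $j$). Once this structural result is in hand, the strict inequalities $\frac{1}{3} < \frac{1}{2}$ immediately yield the claimed strict Pareto improvement over every Nash equilibrium of $\Gamma_{CT}$.
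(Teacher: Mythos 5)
Your first half (the mediated equilibrium attaining $u_s=u_r=\tfrac12$) is correct and coincides with the paper's Proposition~\ref{prop:neq-mediator-nonbinary}. The second half, however, has a genuine gap: you analyze a \emph{one-round} signaling game in which the sender emits a single message $s$, the receiver forms a posterior $P(\cdot\mid s)$, and sender incentive compatibility lets you equate $\rho(j\mid s)$ across the support of type $t_j$. But the proposition is about $\Gamma_{CT}$, the \emph{long} cheap talk extension, where both players exchange messages over arbitrarily many rounds and the receiver's action depends on the entire transcript, including the receiver's own (possibly randomized) messages. The sender cannot unilaterally select which transcript occurs, so the step ``every on-path message used by $t_j$ yields the same payoff, hence $\rho(j\mid s)$ is constant on $t_j$'s support'' does not follow from Nash optimality of the sender's strategy; and reducing long cheap talk to one-round cheap talk is precisely what fails in general (this is the content of Aumann--Hart and Krishna--Morgan, and the reason the proposition is nontrivial). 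The revelation principle in the paper only maps cheap-talk equilibria \emph{into} mediated ones, and the mediated IC constraints cannot rule out the payoff $\tfrac12$, since $\mu$ itself satisfies them. You are also claiming something stronger than what is proved (or needed): that every equilibrium of $\Gamma_{CT}$ yields exactly $\tfrac13$; the paper only establishes $u_s(\vec\sigma)=u_r(\vec\sigma)<\tfrac12$. Finally, the case ``some $p_{t_j}^j=0$'' is not symmetric to the main case and is not actually handled by your sketch.

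For comparison, the paper's argument stays inside the multi-round model throughout: it first proves $u_s(\vec\sigma)=u_r(\vec\sigma)$ for every equilibrium of $\Gamma_{CT}$ via the cyclic relabeling deviations (the sender pretends its type is shifted by one; the receiver shifts its action by one), then shows that any equilibrium with payoff at least $\tfrac12$ forces $p_{i,i}=p_{i,i+1}=\tfrac12$, $p_{i,i-1}=0$ and pins the receiver's posterior at action time to $(\tfrac12,\tfrac12,0)$ on $\{t_{i-1},t_i\}$, and finally constructs a sender deviation (simulate a uniformly random fake type, then switch course at the round where the transcript is about to exclude some $t_i$) that earns $\tfrac23$, contradicting equilibrium. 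If you want to salvage your approach, you would need to either carry out your support/posterior argument at the level of receiver local histories and rounds, as the paper does, or prove separately that in this particular game long cheap talk adds nothing over one round of communication --- which is essentially the whole difficulty.
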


\section{Proof of Theorem~\ref{thm:persuasion-binary-core}}

At a high level, proof of Theorem~\ref{thm:persuasion-binary-core} goes as follows. First we show that, in information transmission games with binary actions there exist non-constant Nash equilibria in $\Gamma_d$ if and only if the sender and the receiver are \emph{aligned}, which happens when the receiver is better off in expectation playing exactly what the sender wants instead of playing an action with no information. In this case, it is straightforward to check that the equilibrium that maximizes both the sender and the receiver utilities is precisely the one in which the receiver plays the optimal action for the sender in every possible state. Moreover, it is possible to reach this outcome in both $\Gamma_d$ and $\Gamma_{CT}$, which implies that $NE^w(\Gamma_{CT}) = NE^w(\Gamma_d)$ for all functions $w$ that are monotone over $u_r$ and $u_s$.

Next, we provide the full proof. Given a function $\mu: T_s \longrightarrow \Delta(\{0,1\})$, let $P^\mu : T_s \longrightarrow [0,1]$ be the function that maps $\omega \in T_s$ to the probability that the distribution $\mu(\omega)$ assigns to $0$. We begin the proof of Theorem~\ref{thm:persuasion-binary-core} by characterizing all outcomes of Nash equilibria of $\Gamma_d$.

\begin{proposition}\label{prop:characterization-binary}
Let $\Gamma = (P,T,q,A,U)$ be an information transmission game with $A_r = \{0,1\}$. Then, given a function $\mu: T_s \rightarrow \Delta(A_r)$, there exists a Nash equilibrium of $\Gamma_d$ that induces $\mu$ if and only if, for each type $\omega \in T_s$ we have that
\begin{itemize}
\item[(a)] $u_s(\omega, 0) > u_s(\omega,1) \Longrightarrow P^\mu(\omega) \ge P^\mu(\omega')$ for all $\omega' \in T_s$.
\item[(b)] $u_s(\omega, 0) < u_s(\omega,1) \Longrightarrow P^\mu(\omega) \le P^\mu(\omega')$ for all $\omega' \in T_s$.
\item[(c)] $u_r(\mu) \ge \max(u_r(\mathbf{0}_{T_s}),u_r(\mathbf{1}_{T_s}))$, where $\mathbf{0}_{T_s}$ and $\mathbf{1}_{T_s}$ are the functions that map all elements to the constant $0$ and $1$ distributions respectively.
\end{itemize}
\end{proposition}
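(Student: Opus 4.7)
The plan is to invoke the revelation principle (Proposition~\ref{prop:canonical}) to reduce the existence of a Nash equilibrium of $\Gamma_d$ that induces $\mu$ to the single question of whether the canonical profile $\vec{\tau}^\mu + \tau_d^\mu$ is itself a Nash equilibrium. Under this profile, the sender reports a type to the mediator, the mediator samples an action from $\mu$ applied to the reported type and forwards it to the receiver, and the receiver plays the received action. So it suffices to characterize when both incentive-compatibility constraints hold: the sender wants to report truthfully, and the receiver wants to play whatever action the mediator recommends.

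For the sender's IC, if a sender of type $\omega$ reports $\omega'$, her expected utility simplifies to $u_s(\omega,1) + P^\mu(\omega')\bigl(u_s(\omega,0) - u_s(\omega,1)\bigr)$. Whenever $u_s(\omega,0) > u_s(\omega,1)$, truthfulness requires $P^\mu(\omega)$ to be a maximum of $P^\mu$, which is precisely condition~(a); whenever $u_s(\omega,0) < u_s(\omega,1)$, it must be a minimum, which is precisely condition~(b); when the sender is indifferent, the value of $P^\mu(\omega)$ is unrestricted.

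For the receiver's IC, let $q_a$ denote the unconditional probability that the mediator recommends action $a$, and let $E_a(b)$ denote the conditional expectation of $u_r(\omega,b)$ given that the recommendation is $a$. A short computation I plan to carry out yields the two identities $u_r(\mu) - u_r(\mathbf{0}_{T_s}) = q_1\bigl(E_1(1) - E_1(0)\bigr)$ and $u_r(\mu) - u_r(\mathbf{1}_{T_s}) = q_0\bigl(E_0(0) - E_0(1)\bigr)$. Consequently, condition~(c) is equivalent to the standard per-recommendation obedience constraints $E_1(1) \geq E_1(0)$ and $E_0(0) \geq E_0(1)$ (vacuously satisfied when $q_0$ or $q_1$ vanishes). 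Since the mediator ignores the receiver's messages under the canonical protocol, any receiver deviation reduces to a (possibly randomized) map $\{0,1\}\to\{0,1\}$; per-recommendation obedience rules out the two constant maps directly, and the remaining ``swap'' map is dominated whenever both constants are, so no further inequalities are needed.

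The main subtle point will be this last step: condition~(c), which only compares $u_r(\mu)$ against the two constant outcomes $\mathbf{0}_{T_s}$ and $\mathbf{1}_{T_s}$, already captures the full obedience constraint. This relies crucially on $|A_r| = 2$, since with only two actions the receiver's deviation space collapses to four deterministic functions and the constant deviations alone suffice to dominate the swap; with three or more actions an aggregate comparison against constant outcomes would be strictly weaker than per-recommendation obedience.
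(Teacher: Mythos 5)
Your proposal is correct and follows essentially the same route as the paper's proof: reduce to the $\mu$-canonical profile via the revelation principle, derive (a) and (b) from the sender's truthful-reporting constraint, and show that with only two actions the receiver's obedience constraints are exactly the comparison of $u_r(\mu)$ against the two constant deviations in (c). Your identities $u_r(\mu) - u_r(\mathbf{0}_{T_s}) = q_1\bigl(E_1(1)-E_1(0)\bigr)$ and $u_r(\mu) - u_r(\mathbf{1}_{T_s}) = q_0\bigl(E_0(0)-E_0(1)\bigr)$ just make explicit the computation the paper leaves implicit.
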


This proposition says that a function $\mu: T_s \rightarrow \Delta(A_r)$ can be the outcome of some Nash equilibrium of $\Gamma_d$ if and only if (a) the types in which the sender prefers $0$ are also the ones in which it is more likely that the receiver plays $0$, (b) the types in which the sender prefers $1$ are also the ones in which it is more likely that the receiver plays $1$, and (c) the receiver gets a better utility from playing the suggested action than by ignoring the suggestion and playing the action that a priori gives her the most utility with no information.

\begin{proof}[Proof of Proposition~\ref{prop:characterization-binary}]
Let $\vec{\tau}^\mu + \sigma_d^\mu$ be the $\mu$-canonical strategy of $\Gamma_d$. We show independently that $\vec{\tau}^\mu + \sigma_d^\mu$ is incentive-compatible for the sender if and only if (a) and (b) are satisfied, and that $\vec{\tau}^\mu + \sigma_d^\mu$ is incentive-compatible for the receiver if (c) is satisfied.

Note that $\vec{\tau}^\mu + \sigma_d^\mu$ is incentive-compatible for the sender if and only if the sender is always incentivized to report its true type to the mediator. This implies that, the probability of playing $0$ should be maximal on types in which the sender prefers $0$ to $1$, and the probability of playing $1$ should be maximal (i.e., the probability of playing $0$ should be minimal) whenever the sender prefers $1$ to $0$. It is also straightforward to check that, if these conditions are satisfied, then $\vec{\tau}^\mu + \sigma_d^\mu$ is incentive-compatible for the sender.

For the receiver, note that if  $\vec{\tau}^\mu + \sigma_d^\mu$ is incentive-compatible, then condition (c) is necessary. Otherwise, the receiver is better off ignoring all the communication phase altogether and playing the action that reports the most utility in expectation without any knowledge of the sender's type. For the converse, suppose that condition (c) holds but that  $\vec{\tau}^\mu + \sigma_d^\mu$ is not incentive-compatible for the receiver. This means that the receiver either prefers to play $0$ when the mediator suggests $1$ or that the receiver prefers to play $1$ when the mediator suggests $0$ (or both). The first case implies that the receiver can increase her utility by playing $0$ regardless of the mediator's suggestion, while the second case implies that the receiver can increase her utility by always playing $1$. Both cases contradict the fact that $u_r(\mu) \ge \max(u_r(\mathbf{0}_{T_s}),u_r(\mathbf{1}_{T_s}))$.
\end{proof}

Suppose that in $\Gamma$ there are no types $\omega \in T_s$ such that $u_s(\omega, 0) = u_s(\omega, 1)$. Then, we can partition $T_s$ into two subsets $T_s^0$ and $T_s^1$, which contain the types in which the sender prefers action $0$ and action $1$ respectively. Conditions (a) and (b) of Proposition~\ref{prop:characterization-binary} imply that, if $\mu$ is the outcome of a Nash equilibrium of $\Gamma_d$, then there exist two values $p_0, p_1 \in [0,1]$ with $p_0 \ge p_1$ such that 
$$
P^\mu(\omega) = \left\{
\begin{array}{ll}
p_0 & \mbox{if } \omega \in T_s^0\\
p_1 & \mbox{otherwise.}
\end{array}
\right.
$$

With this notation, condition (c) is equivalent to the following system of equations: $$
\begin{array}{ll}
\sum_{\omega \in T_s^0} q(\omega)(p_0 u_r(\omega, 0) + (1 - p_0)u_r(\omega, 1)) + \sum_{\omega \in T_s^1}q(\omega)(p_1 u_r(\omega, 0) + (1 - p_1)u_r(\omega, 1)) \ge u_r(\mathbf{0}_{T_s})\\
\sum_{\omega \in T_s^0} q(\omega)(p_0 u_r(\omega, 0) + (1 - p_0)u_r(\omega, 1)) + \sum_{\omega \in T_s^1}q(\omega)(p_1 u_r(\omega, 0) + (1 - p_1)u_r(\omega, 1)) \ge u_r(\mathbf{1}_{T_s}),
\end{array}
$$

where $q(\omega)$ is the probability that the sender's type is $\omega$ according to the type distribution $q$. The right-hand side of the equations can be expressed as follows: $$
\begin{array}{l}
u_r(\mathbf{0}_{T_s}) = \sum_{\omega \in T_s^0} q(\omega)u_r(\omega, 0) + \sum_{\omega \in T_s^1} q(\omega)u_r(\omega, 0)\\
u_r(\mathbf{1}_{T_s}) = \sum_{\omega \in T_s^0} q(\omega)u_r(\omega, 1) + \sum_{\omega \in T_s^1} q(\omega)u_r(\omega, 1).
\end{array}$$

Given $i,j \in \{0,1\}$, let $A_i^j := \sum_{\omega \in T_s^j} q(\omega) u_r(\omega, i)$. Then, condition (c) can be expressed as $$
\begin{array}{ll}
p_0A_0^0 + (1-p_0)A_1^0 + p_1A_0^1 + (1 - p_1)A_1^1 \ge A_0^0 + A_0^1\\
p_0A_0^0 + (1-p_0)A_1^0 + p_1A_0^1 + (1 - p_1)A_1^1 \ge A_1^0 + A_1^1,
\end{array}
$$
which is equivalent to $$
\begin{array}{rl}
(1-p_0)(A_1^0- A_0^0) + (1 - p_1)(A_1^1 - A_0^1) & \ge 0\\
p_0(A_0^0 - A_1^0) + p_1(A_0^1 - A_1^1) & \ge 0.
\end{array}
$$
Setting $B_0 := A_0^0 - A_1^0$ and $B_1 = A_0^1 - A_1^1$ we get
$$
\begin{array}{rl}
(1 - p_0)B_0 + (1-p_1)B_1 & \le 0\\
p_0B_0 + p_1B_1 & \ge 0,\\
\end{array}
$$
which can be reduced to 
\begin{equation}\label{eq:solutions-binary}
p_0B_0 + p_1B_1 \ge \max(0, B_0 + B_1).
\end{equation}

Thus, $\mu$ is the outcome of a Nash equilibrium of $\Gamma_d$ if and only if $0 \le p_1 \le p_0 \le 1$ and the inequality above is satisfied. Note that $B_0$ and $B_1$ are constants that depend uniquely on the type distribution $q$ and the utility functions $u_r$ and $u_s$. The following proposition characterizes all solutions of outcomes of Nash equilibria of $\Gamma_d$ when $B_0 < 0$ or $B_1 > 0$.

\begin{proposition}\label{prop:constant-solutions}
If $\Gamma = (P,T,q,A,U)$ is an information transmission game such that $B_0 < 0$ or $B_1 > 0$, then, all outcomes of Nash equilibria of $\Gamma_d$ are constant (i.e., they map all types to the same distribution).
\end{proposition}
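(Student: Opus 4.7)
The plan is to leverage the characterization established just above. Any Nash equilibrium outcome of $\Gamma_d$ corresponds to a pair $(p_0, p_1)$ with $0 \le p_1 \le p_0 \le 1$ satisfying (\ref{eq:solutions-binary}), and such an outcome is constant precisely when $p_0 = p_1$. It therefore suffices to show that, under the hypothesis $B_0 < 0$ or $B_1 > 0$, every admissible pair collapses to the diagonal $p_0 = p_1$. I would proceed by a case analysis on the signs of $B_0$, $B_1$, and $B_0 + B_1$.

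First I would dispatch the two pure-sign cases. If $B_0 < 0$ and $B_1 \le 0$, then $B_0 + B_1 < 0$, so (\ref{eq:solutions-binary}) reduces to $p_0 B_0 + p_1 B_1 \ge 0$; both summands are nonpositive, so both must vanish, forcing $p_0 = 0$ (since $B_0 \ne 0$) and hence $p_1 = 0$. Symmetrically, if $B_1 > 0$ and $B_0 \ge 0$, then (\ref{eq:solutions-binary}) rearranges to $(1 - p_0) B_0 + (1 - p_1) B_1 \le 0$; both summands are nonnegative, so both must vanish, forcing $p_1 = 1$ (since $B_1 > 0$) and hence $p_0 = 1$.

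The remaining mixed-sign case $B_0 < 0$ and $B_1 > 0$ is the main obstacle, since neither coordinate of the linear functional $p_0 B_0 + p_1 B_1$ has a definite sign. My idea is to use the order constraint $p_1 \le p_0$ to reduce the two-variable inequality to a one-variable one. Multiplying $p_0 \ge p_1$ by the negative quantity $B_0$ gives $p_0 B_0 + p_1 B_1 \le p_1(B_0 + B_1)$. If $B_0 + B_1 \le 0$, the right-hand side is at most $0$ while (\ref{eq:solutions-binary}) forces the left-hand side to be at least $0$; equality throughout then yields $p_0 B_0 = p_1 B_0$, hence $p_0 = p_1$. If instead $B_0 + B_1 > 0$, I would rewrite (\ref{eq:solutions-binary}) as $(1 - p_0)|B_0| \ge (1 - p_1) B_1$ and combine with $1 - p_1 \ge 1 - p_0 \ge 0$ to obtain $(1 - p_0)(|B_0| - B_1) \ge 0$; since $|B_0| < B_1$ in this subcase, this forces $p_0 = 1$, and then $(1 - p_1)B_1 \le 0$ with $B_1 > 0$ gives $p_1 = 1$.

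In every case the admissible $(p_0, p_1)$ satisfies $p_0 = p_1$, so $P^\mu$ assigns the same value to every element of $T_s^0 \cup T_s^1 = T_s$ and the outcome $\mu$ is constant. The crux of the argument is the mixed-sign subcase, where monotonicity $p_1 \le p_0$ is indispensable; the remaining cases are essentially one-line sign observations once the sign of $B_0 + B_1$ is identified.
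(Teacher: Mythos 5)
Your proposal is correct and follows essentially the same route as the paper: reduce to the inequality $p_0B_0 + p_1B_1 \ge \max(0, B_0+B_1)$ with $0 \le p_1 \le p_0 \le 1$ and run a sign case analysis on $B_0$, $B_1$, and $B_0+B_1$ to force $p_0 = p_1$. The only (cosmetic) difference is that you merge the mixed-sign subcases $B_0 + B_1 < 0$ and $B_0 + B_1 = 0$ into a single equality-chain argument, whereas the paper treats them separately.
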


\begin{proof}
Proposition~\ref{prop:constant-solutions} reduces to show that, if $B_0 < 0$ or $B_1 > 0$, all solutions of Equation~\ref{eq:solutions-binary} such that $0 \le p_1 \le p_0 \le 1$ satisfy $p_0 = p_1$. We divide the proof in the following cases: 
\begin{itemize}
\item \textbf{Case $B_1 > 0$ and $B_0 \ge 0:\quad$} $p_0B_0 + p_1B_1 \ge B_0 + B_1$ is satisfied only if $p_1 = 1$. Therefore, the only solution such that $0 \le p_1 \le p_0 \le 1$ is $p_0 = p_1 = 1$.
\item \textbf{Case $B_1 > 0$ and $B_0 < 0$:} 
\begin{itemize}
\item \textbf{Case $B_0 + B_1 < 0$:} If $B_0 + B_1 < 0$, then $p_0B_0 + p_1B_1 \ge 0$ must be satisfied. However, $p_0B_0 + p_1B_1 \le p_0(B_0 + B_1) \le 0$, and the equality only holds if $p_0 = 0$. Therefore, in this case, the only solution is $p_0 = p_1 = 0$. 
\item \textbf{Case $B_0 + B_1 > 0$:} If $B_0 + B_1 > 0$, then it is required that $p_0B_0 + p_1B_1 \ge B_0 + B_1$, which is equivalent to $(1 - p_0)B_0 + (1-p_1)B_1 \le 0$. However, $(1 - p_0)B_0 + (1-p_1)B_1 \ge (1-p_1)(B_0 + B_1) \ge 0$, and the last equality holds only when $p_1 = 1$. Thus, the only solution is $p_0 = p_1 = 1$. 
\item \textbf{Case $B_0 + B_1 = 0$:} If $B_0 = -B_1$, then we need that $(p_0 - p_1)B_0 \ge 0$, which, since $p_0 \ge p_1$, can only happen if $p_0 = p_1$.
\end{itemize}
\item \textbf{Case $B_1 \le 0$ and $B_0 < 0:\quad$} $p_0B_0 + p_1B_1 \ge 0$ is satisfied only if  $p_0 = 0$, which means that the only solution is $p_0 = p_1 = 0$.
\end{itemize}

In short, if $B_1 > 0$ and $B_0 + B_1 > 0$, the only solution is $p_0 = 1, p_1 = 1$. If $B_0 < 0$ and $B_0 + B_1 < 0$, the only solution is $p_0 = 0, p_1 = 0$. The last remaining case is when $B_0 < 0$,  $B_1 > 0$, and $B_0 + B_1 = 0$. Here, the only solutions are those with $p_0 = p_1$. A visual representation of the solutions is shown in Figure~\ref{fig:solutions-binary}.

\begin{center}
\begin{figure}[H]
\label{fig:solutions-binary}
\centering
\includegraphics[scale=1]{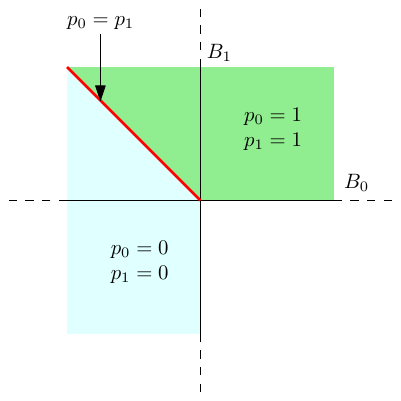}
\caption{Visual representation of all solutions of Equation~\ref{eq:solutions-binary} with $0 \le p_1 \le p_0 \le 1$.}
\end{figure}
\end{center}
\end{proof}

Proposition~\ref{prop:constant-solutions} shows that, if $B_0 < 0$ or $B_1 > 0$, all outcomes of Nash equilibria of $\Gamma_d$ are constant. Suppose that $\mu : T_s \longrightarrow \Delta(A_r)$ is a constant function that is the outcome of a Nash equilibrium of $\Gamma_d$. Then, it is easy to check that, in $\Gamma_{CT}$, the strategy profile in which the sender does nothing and the receiver simply plays an action sampled from the image of $\mu$ is a Nash equilibrium as well. Therefore, Proposition~\ref{prop:constant-solutions} implies the following corollary.

\begin{corollary}\label{cor:constant}
If $\Gamma = (P,T,q,A,U)$ is an information transmission game such that $B_0 < 0$ or $B_1 > 0$ and $w$ is a positive welfare function , then $$M^w(\Gamma) = 1.$$
\end{corollary}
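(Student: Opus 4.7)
The plan is to prove the two inequalities $NE^w(\Gamma_d) \ge NE^w(\Gamma_{CT})$ and $NE^w(\Gamma_d) \le NE^w(\Gamma_{CT})$. The first is immediate from Proposition~\ref{prop:canonical2}, so the real content is the second. For this, I would take an arbitrary Nash equilibrium of $\Gamma_d$ and construct a strategy profile for $\Gamma_{CT}$ with exactly the same outcome, which will automatically have the same expected welfare.

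First, I would invoke Proposition~\ref{prop:constant-solutions}: under the hypothesis $B_0 < 0$ or $B_1 > 0$, every Nash equilibrium of $\Gamma_d$ has a constant outcome $\mu \equiv p$ for some fixed distribution $p \in \Delta(\{0,1\})$. I would then propose the following strategy profile in $\Gamma_{CT}$: the sender sends no messages (or any canned messages that do not depend on her type), and the receiver ignores everything received and plays an action sampled from $p$. The induced outcome is clearly the constant function mapping every type to $p$, i.e. $\mu$ itself.

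The next step is to verify that this profile is a Nash equilibrium of $\Gamma_{CT}$. The sender has no profitable deviation since the receiver's action distribution is independent of what the sender transmits. For the receiver, I would use condition~(c) of Proposition~\ref{prop:characterization-binary} applied to the original equilibrium of $\Gamma_d$: $u_r(\mu) \ge \max(u_r(\mathbf{0}_{T_s}), u_r(\mathbf{1}_{T_s}))$. Since the receiver in $\Gamma_{CT}$ learns nothing about the sender's type, her expected utility from playing $p$ is exactly $u_r(\mu)$, which already dominates the utility of each pure action. Because $|A_r|=2$, any mixed action is a convex combination of the two pure actions, so $p$ must be a best response; concretely either $p$ is concentrated on the uniquely optimal pure action, or $u_r(\mathbf{0}_{T_s}) = u_r(\mathbf{1}_{T_s})$ and every mixture (including $p$) is optimal.

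Combining the two inequalities yields $NE^w(\Gamma_d) = NE^w(\Gamma_{CT})$, from which $M^w(\Gamma) = 1$ follows by definition (and the corner case with zero welfare in $\Gamma_{CT}$ is handled by the same outcome-preservation argument, since it forces $NE^w(\Gamma_d) = 0$ as well). The only non-routine step is checking that the constant mixed action $p$ is a best response for the receiver in $\Gamma_{CT}$; this is the main obstacle, but it falls out cleanly once one exploits condition~(c) together with the binary structure of $A_r$.
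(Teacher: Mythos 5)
Your proof is correct and follows essentially the same route as the paper: invoke Proposition~\ref{prop:constant-solutions} to conclude that every mediated equilibrium outcome is constant, then replicate that constant outcome in $\Gamma_{CT}$ by having the sender stay silent and the receiver play the fixed mixture, combining this with Proposition~\ref{prop:canonical2} for the reverse inequality. The only difference is that you explicitly verify the receiver's best-response condition via condition~(c) of Proposition~\ref{prop:characterization-binary}, a step the paper dismisses as ``easy to check.''
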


This result may seem stronger than Theorem~\ref{thm:persuasion-binary-core} since it does not require that $w$ is monotone over $u_r$ and $u_s$. However, it only applies if $B_0 < 0$ or $B_1 > 0$. In fact, Proposition~\ref{prop:example1} shows that Corollary~\ref{cor:constant}  does not hold in general. The remaining case is the one in which $B_0 \ge 0$ and $B_1 \le 0$. It is easy to check that $B_0$ is positive if and only if the receiver gets an equal or better expected utility playing 0 than playing 1 conditioned on the fact that the state is in $T_s^0$. Analogously, $B_1$ is negative if the receiver prefers playing $1$ than $0$ in $T_s^1$. Therefore, if $B_0 \ge 0$ and $B_1 \le 0$ we say that the sender and the receiver are \emph{aligned} since they have approximately the same preferences. The following proposition states that, in this case, the outcome of a Nash equilibrium that maximizes both the sender and the receivers' utilities is the one in which the receiver plays the best action for the sender in every single state.

\begin{proposition}\label{prop:best-eq}
If $\Gamma = (P,T,q,A,U)$ is an information transmission game such that $B_0 \ge 0$ and $B_1 \le 0$, then the outcome defined by $p_0 = 1$ and $p_1 = 0$ is incentive-compatible for the sender and for the receiver, and maximizes both $u_r$ and $u_s$ among all Nash equilibria of $\Gamma_d$.
\end{proposition}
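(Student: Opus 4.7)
My plan is to verify the proposition directly by plugging the specific outcome $p_0 = 1$, $p_1 = 0$ into the characterization of Nash equilibrium outcomes that has already been established in the surrounding discussion. Recall that the outcome $\mu$ in question assigns probability one to the sender's strictly preferred action in every state.

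First I would show that $\mu$ is indeed the outcome of a Nash equilibrium of $\Gamma_d$. Conditions (a) and (b) of Proposition~\ref{prop:characterization-binary} follow immediately from the two-valued structure of $P^\mu$ (constantly $p_0$ on $T_s^0$ and $p_1$ on $T_s^1$, with $p_0 \ge p_1$). For condition (c), I appeal to its equivalent reformulation Equation~\ref{eq:solutions-binary}, which for $p_0 = 1$ and $p_1 = 0$ reduces to $B_0 \ge \max(0, B_0 + B_1)$; this holds because $B_0 \ge 0$ and $B_1 \le 0$ by assumption.

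Next I would verify the two maximality claims. For the sender, the argument is immediate: in every state $\omega$ the receiver plays the action that maximizes $u_s(\omega, \cdot)$, so $u_s(\mu)$ realizes the pointwise maximum, which upper-bounds $u_s$ over every strategy profile in any extension of $\Gamma$. For the receiver, I would derive the linear closed form
$$u_r(\mu') = u_r(\mathbf{1}_{T_s}) + p_0 B_0 + p_1 B_1$$
for any equilibrium outcome $\mu'$ parameterized by $(p_0, p_1)$; this follows by splitting the expectation defining $u_r(\mu')$ according to whether the state lies in $T_s^0$ or $T_s^1$, pulling out $\sum_\omega q(\omega) u_r(\omega,1) = u_r(\mathbf{1}_{T_s})$, and recognizing the remaining coefficients of $p_0$ and $p_1$ as $B_0$ and $B_1$ respectively. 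Since $B_0 \ge 0$ and $B_1 \le 0$, this linear function is monotonically increasing in $p_0$ and decreasing in $p_1$ over the feasible region $\{0 \le p_1 \le p_0 \le 1\}$, so its maximum is attained at the corner $(p_0,p_1) = (1,0)$, which by step one lies in the feasible set of equilibrium outcomes.

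There is no serious obstacle in the argument: the characterization machinery from Proposition~\ref{prop:characterization-binary} and the reduction to Equation~\ref{eq:solutions-binary} already reduce the problem to a short linear-optimization check. The only subtle point worth emphasizing is that the pointwise optimum for the sender would in general fail to be incentive-compatible for the receiver, and it is precisely the alignment hypothesis $B_0 \ge 0$, $B_1 \le 0$ that guarantees condition (c), and hence the simultaneous maximization of both $u_s$ and $u_r$, at the single outcome $(p_0,p_1) = (1,0)$.
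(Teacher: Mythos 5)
Your proposal is correct and follows essentially the same route as the paper: it verifies feasibility by checking that $(p_0,p_1)=(1,0)$ satisfies Equation~\ref{eq:solutions-binary} under $B_0 \ge 0$, $B_1 \le 0$, notes the sender's utility is pointwise maximal, and maximizes the receiver's utility via the identical closed form $u_r(\mu) = A_1^0 + A_1^1 + p_0 B_0 + p_1 B_1$ (your $u_r(\mathbf{1}_{T_s})$ equals $A_1^0 + A_1^1$). The only difference is that you spell out the reduction of condition (c) to $B_0 \ge \max(0, B_0+B_1)$ explicitly, which the paper leaves as ``straightforward to check.''
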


\begin{proof}
It is straightforward to check that $p_0 = 1$ and $p_1 = 0$ satisfies Equation~\ref{eq:solutions-binary} if the sender and the receiver are aligned. Moreover, in this outcome, the sender gets the maximum possible utility in each state. It remains to prove that this outcome is also the best for the receiver. To see this, note that $$u_r(\mu) = p_0A_0^0 + (1 - p_0)(A_1^0) + p_1 A_0^1 + (1 - p_1)A_1^1 = A_1^0 + A_1^1 + p_0B_0 + p_1B_1.$$
Since $B_0 \ge 0$ and $B_1 \le 0$, this expression is maximized when $p_0 = 1$ and $p_1 = 0$.
\end{proof}

Proposition~\ref{prop:best-eq} shows that, if the sender and the receiver are \emph{aligned}, then the outcome of a Nash equilibrium that maximizes both $u_s$ and $u_r$ is the one in which the receiver plays the sender's preferred action in each state. In particular, this outcome $\mu$ maximizes any positive welfare function $w$ that is monotone over $u_s$ and $u_r$. To prove Theorem~\ref{thm:persuasion-binary-core}, it remains to show that there exists a Nash equilibrium in $\Gamma_{CT}$ that gives the same welfare in expectation. For this purpose, consider the strategy profile $\vec{\sigma}$ in $\Gamma_{CT}$ in which the sender sends the receiver a message with its preferred action and the receiver plays that action. By construction, $\vec{\sigma}$ induces $\mu$ and, since the sender and the receiver are aligned, $\vec{\sigma}$ is a Nash equilibrium of $\Gamma_{CT}$. This proves Theorem~\ref{thm:persuasion-binary-core}.

It is important to note that, for ease of exposition, we were assuming that there were no states in which the receiver gets the same utility with action $0$ and with action $1$. However, an analogous reasoning applies for these cases as well, as shown in the following proposition.

\begin{proposition}\label{prop:characterization-binary-extended}
If $\Gamma = (P,T,q,A,U)$ is an information transmission game and $w$ is a positive welfare function that is monotone over $u_r$ and $u_s$, then there exists a Nash equilibrium $\vec{\sigma} + \sigma$ of $\Gamma_d$ that maximizes $w$ such that there exist two values $p_0, p_1 \in [0,1]$ with $p_0 \ge p_1$ such that 
$$
P^{\mu_{\vec{\sigma} + \sigma}}(\omega) = \left\{
\begin{array}{ll}
p_0 & \mbox{if } u_s(0) > u_s(1) \mbox{ or } (u_s(0) = u_s(1) \mbox{ and } u_r(0) > u_r(1))\\
p_1 & \mbox{otherwise.}
\end{array}
\right.
$$
\end{proposition}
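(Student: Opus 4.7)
The plan is to take any welfare-maximizing Nash equilibrium of $\Gamma_d$ and rearrange its outcome on the ``indifferent'' types---those $\omega \in T_s$ with $u_s(\omega,0) = u_s(\omega,1)$---so that it takes only the two values $p_0$ and $p_1$ in the pattern prescribed by the statement, without decreasing welfare or violating any equilibrium condition. The key observation is that Proposition~\ref{prop:characterization-binary} constrains $P^\mu(\omega)$ only when the sender has a strict preference: condition (a) forces $P^\mu(\omega)$ to equal a common maximum value $p_0$ on all $\omega$ with $u_s(\omega,0) > u_s(\omega,1)$, while condition (b) forces it to equal a common minimum $p_1$ on all $\omega$ with $u_s(\omega,0) < u_s(\omega,1)$. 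On indifferent types the only remaining constraint from (a) and (b) is $p_1 \le P^\mu(\omega) \le p_0$, leaving freedom to redistribute probability mass.

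Let $\vec{\sigma}+\sigma$ be any Nash equilibrium of $\Gamma_d$ attaining $NE^w(\Gamma_d)$ and let $\mu = \mu_{\vec{\sigma}+\sigma}$ be its outcome. I would define a new function $\tilde{\mu} : T_s \to \Delta(A_r)$ by keeping $\mu$ unchanged on strict-preference types and by setting $P^{\tilde{\mu}}(\omega) = p_0$ on indifferent $\omega$ with $u_r(\omega,0) > u_r(\omega,1)$ and $P^{\tilde{\mu}}(\omega) = p_1$ on the remaining indifferent types. By construction $\tilde{\mu}$ matches the form stated in the proposition, so it suffices to verify (i) that $\tilde{\mu}$ is the outcome of a Nash equilibrium of $\Gamma_d$, and (ii) that $w(\tilde{\mu}) \ge w(\mu)$.

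For (i), I apply Proposition~\ref{prop:characterization-binary} to $\tilde{\mu}$. Conditions (a) and (b) are inherited because $\tilde{\mu}$ agrees with $\mu$ on strict-preference types, and on indifferent types we have $P^{\tilde{\mu}}(\omega) \in \{p_0, p_1\} \subseteq [p_1, p_0]$. For condition (c) I argue the pointwise strengthening $u_r(\omega, \tilde{\mu}(\omega)) \ge u_r(\omega, \mu(\omega))$ for every $\omega$: both sides are equal on strict-preference types, and on indifferent types $\tilde{\mu}$ places all feasible mass on the receiver's preferred action while $P^\mu(\omega) \in [p_1, p_0]$ satisfies the same bounds, so the linear map $p \mapsto p\,u_r(\omega,0) + (1-p)\,u_r(\omega,1)$ takes a weakly larger value at $P^{\tilde{\mu}}(\omega)$ than at $P^\mu(\omega)$. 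Averaging against $q$ gives $u_r(\tilde{\mu}) \ge u_r(\mu) \ge \max(u_r(\mathbf{0}_{T_s}), u_r(\mathbf{1}_{T_s}))$, so (c) holds. For (ii), on indifferent types $u_s(\omega,\cdot)$ is constant, so $u_s$ is unchanged pointwise, while $u_r$ weakly increases pointwise by the argument just given; monotonicity of $w$ over $u_r$ and $u_s$ then yields $w(\omega, \tilde{\mu}(\omega)) \ge w(\omega, \mu(\omega))$ for all $\omega$, hence $w(\tilde{\mu}) \ge w(\mu) = NE^w(\Gamma_d)$.

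The main thing to get right is the distinction between pointwise and averaged statements in condition (c): the redistribution is performed independently on each indifferent type, so I rely on the pointwise $u_r$-improvement rather than trying to re-derive the aggregate inequality in the $B_0, B_1$ notation. The boundary case where the receiver is also indifferent on some $\omega$ routes such $\omega$ into the $p_1$ bucket, which is harmless since both actions yield identical sender and receiver utilities there, so both the pointwise inequality and welfare monotonicity hold with equality.
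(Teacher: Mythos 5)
Your proposal is correct and follows essentially the same route as the paper's proof: start from a welfare-maximizing equilibrium outcome, reassign the indifferent types to $p_0$ or $p_1$ according to the receiver's preference, and invoke Proposition~\ref{prop:characterization-binary} together with monotonicity of $w$. Your pointwise verification of condition (c) and of the welfare comparison is a slightly more careful spelling-out of steps the paper leaves implicit, but it is not a different argument.
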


Proposition~\ref{prop:characterization-binary-extended} implies that the proof of Theorem~\ref{thm:persuasion-binary-core} in the general case is analogous to the proof in the case where the sender is never indifferent between $0$ and $1$. The only difference is that, in this case, we define $T_s^0$ as the subset of types $\omega$ in which either the sender strictly prefers $0$ to $1$ or the sender is indifferent and the receiver strictly prefers $0$, and we define $T_s^1$ as the subset of the remaining types.

\begin{proof}[Proof of Proposition~\ref{prop:characterization-binary-extended}]

Suppose that $\mu$ is the outcome of a Nash equilibrium of $\Gamma_d$ that maximizes $w$. By Proposition~\ref{prop:characterization-binary}, there exist two values $p_0$ and $p_1$ with $p_0 \ge p_1$ such that $P^\mu(\omega) = p_0$ if the sender prefers $0$ to $1$ in $\omega$, $P^\mu(\omega) = p_1$ if the sender prefers $1$ to $0$, and $P^\mu(\omega) \in [p_1, p_0]$ if the sender is indifferent.
Consider a function $P^{\mu'}$ that is equal to $P^\mu$ at all types $\omega$ in which the sender strictly prefers $0$ or in which the sender strictly prefers $1$. In the types $\omega$ in which the sender is indifferent between $0$ and $1$, we set $P^{\mu'}(\omega) = p_0$ if the receiver prefers $0$, and $P^{\mu'}(\omega) = p_1$ if the receiver prefers $1$ or is indifferent between $0$ and $1$. It is easy to check that $\mu'$ is the outcome of a Nash equilibrium of $\Gamma_d$ since, if $P^\mu$ satisfies the conditions of Proposition~\ref{prop:characterization-binary}, so does $P^{\mu'}$. Moreover $u_r(\mu') \ge u_r(\mu)$ while $u_s(\mu') = u_s(\mu)$. Since $w$ is positive and monotone over $u_r$ and $u_s$, $P^{\mu'}$ also maximizes $w$ as desired.
\end{proof}

\section{Proof of Proposition~\ref{prop:example1}}\label{sec:proof-2}

Consider a $\mu$-canonical strategy profile $\vec{\tau}^\mu + \tau_d^\mu$ in $\Gamma_d$, where $\mu$ is the function that maps $t_1$ to to the constant distribution that assigns $1$ with probability $1$, and maps $t_0$ to the uniform distribution over $\{0,1\}$. It is easy to check that $\mu$ satisfies the conditions of Proposition~\ref{prop:characterization-binary}, and therefore $\vec{\tau}^\mu + \tau_d^\mu$ is a Nash equilibrium of $\Gamma_d$ such that $w(\vec{\tau}^\mu + \tau_d^\mu) = \frac{1}{4}$. Proposition~\ref{prop:example1} follows from the following proposition.

\begin{proposition}
There is no equilibrium in $\Gamma_{CT}$ in which, with positive probability, the receiver plays $1$ when the sender has type $t_0$.
\end{proposition}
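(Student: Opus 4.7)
The plan is to reason about the receiver's terminal posterior $\gamma(h) = P(t_s = t_0 \mid h)$ and the probability $\alpha(h)$ that the receiver plays action $0$ after terminal history $h$. Reading off the utility table of Example~\ref{example:non-monotone}, $\alpha(h) = 1$ when $\gamma(h) > 1/3$, $\alpha(h) = 0$ when $\gamma(h) < 1/3$, and $\alpha(h) \in [0,1]$ when $\gamma(h) = 1/3$; in particular the prior $\gamma = 1/2$ is strictly above the threshold, so with no information the receiver plays $0$. Writing $U_0 = P^\mu(t_0)$ and $U_1 = 1 - P^\mu(t_1)$ for the sender's equilibrium payoffs in the two states, the goal is to show $U_0 = 1$.

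The argument will combine two ingredients. The first is a sender-incentive-compatibility fact for cheap talk: for each type $t_s$ and each history $h$ with $P(h \mid t_s) > 0$, the sender's conditional payoff at $h$ equals her equilibrium payoff $U_s$. The reason is that the receiver's beliefs at on-path histories are pinned down by the equilibrium strategy and are unaffected by a unilateral deviation of the sender, so sender $t_s$ can freely shift mass among her on-path histories; if any such history gave her strictly more, she would redirect all her mass there. Applied to our binary-action setup this gives $\alpha(h) = U_0$ whenever $P(h \mid t_0) > 0$ and $1 - \alpha(h) = U_1$ whenever $P(h \mid t_1) > 0$. The second ingredient is Proposition~\ref{prop:canonical2} together with condition~(c) of Proposition~\ref{prop:characterization-binary}, which in Example~\ref{example:non-monotone} simplifies to $u_r(\mu) \ge 3/4$, or equivalently $2P^\mu(t_0) - P^\mu(t_1) \ge 1$.

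Supposing for contradiction that $U_0 < 1$, I would then run a case analysis on the posteriors attained in $S_0 := \{h : P(h \mid t_0) > 0\}$. If some $h \in S_0$ has $\gamma(h) > 1/3$, then $\alpha(h) = 1$ and indifference forces $U_0 = 1$, a contradiction. If some $h \in S_0$ has $\gamma(h) < 1/3$, then $\alpha(h) = 0$ forces $U_0 = 0$, so $P^\mu(t_0) = 0$ and $u_r(\mu) \le 1/2$, contradicting the receiver-IC inequality. In the remaining case every $h \in S_0$ has $\gamma(h) = 1/3$. Any reached history outside $S_0$ must satisfy $\gamma = 0$ and therefore lies in $t_1$'s support only; applying the $t_1$-side of indifference to such a history forces $U_1 = 1$, hence $\alpha = 1 - U_1 = 0$ on the $\gamma = 1/3$ histories, hence $U_0 = 0$, reducing to the previous case. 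If no such $\gamma = 0$ history is reached, every on-path history has $\gamma = 1/3$, but Bayes plausibility $E[\gamma(H)] = 1/2$ then forces $1/3 = 1/2$, absurd. Thus $U_0 = 1$, i.e., $P^\mu(t_0) = 1$.

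The main obstacle will be justifying the sender-indifference step rigorously in the long cheap talk model: in a multi-round protocol the receiver's randomization during communication can cause a single pure strategy of the sender to terminate in several distinct histories with different values of $\alpha$, complicating the naive ``shift mass to the best history'' argument. What matters for $\alpha$, however, is only the terminal posterior $\gamma$, and indifference at the level of posteriors is sustained by the observation that the sender can unilaterally replicate any sub-protocol of hers inducing a given posterior with positive probability, while the receiver's on-path beliefs (and hence the $\alpha$-values) stay fixed by the equilibrium and are unaffected by the deviation.
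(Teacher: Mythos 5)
Your overall architecture --- the receiver's cutoff posterior $\gamma=1/3$, the receiver-IC inequality $2P^\mu(t_0)-P^\mu(t_1)\ge 1$ used to rule out $U_0=0$, and the three-way case analysis on terminal posteriors --- is coherent, and the computations are correct. The gap is exactly the one you flag and then wave away: the claim that in a long cheap talk equilibrium the sender's conditional payoff is constant across all terminal histories in her type's support. The deviation available to a sender type is a choice of pure communication \emph{plan}; a single plan, played against $\sigma_r$, still induces a lottery over terminal histories driven by the receiver's own randomization, and the sender cannot unbundle that lottery. The correct indifference statement is therefore only that the \emph{expected} value of $\alpha$ is constant across plans in the support of $\sigma_s(t_0)$, and that is too weak for every branch of your case analysis: in your first case, a plan reaching an $\alpha(h)=1$ history may also reach histories with $\alpha<1$ and still average to $U_0<1$, and likewise in the other cases. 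Your proposed repair --- that the sender can ``unilaterally replicate any sub-protocol of hers inducing a given posterior'' --- is not an available deviation, because the terminal posterior is a function of the entire interactive transcript, including the receiver's messages, which the sender neither controls nor can credibly simulate (responding to a fictitious receiver transcript yields a different, possibly off-path, posterior for the actual one). This is not a technicality: jointly controlled lotteries, under which no unilateral sender deviation can bias which continuation is reached, are precisely the mechanism by which long cheap talk gives a fixed sender type different payoffs at different on-path histories, and they are the whole point of the Aumann--Hart model. (A smaller slip: at $\gamma(h)=1/3$ the receiver is indifferent, so $\alpha$ is not determined by the posterior alone.)

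For contrast, the paper avoids terminal-history indifference entirely. It tracks the receiver's belief $p_1(h_r^k)$ round by round as a martingale, observes that any history at which action $1$ is a best response requires $p_1\ge 2/3$ while the prior is $1/2$, deduces that at some round the sender's message strictly raises $p_1$, and argues that a type-$t_0$ sender would deviate at that round to a message that does not raise $p_1$ (such a message exists by the averaging property of posteriors, and the deviation leaves the receiver's belief map unchanged). To rescue your route you would need either to restate the indifference lemma at the level of communication plans and redo the case analysis with averages, or to adopt a round-by-round deviation argument of this kind.
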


\begin{proof}
Suppose that $\vec{\sigma}$ is a Nash equilibrium of $\Gamma_{CT}$ in which, with positive probability, the receiver plays $1$ when the sender has type $t_0$. Let $H$ be a global history (i.e., a pair $(h_s, h_r)$ of the local histories of the sender and the receiver, respectively) that could occur with positive probability when playing $\vec{\sigma}$ and in which the sender's type is $t_0$ and the receiver plays action $1$. Since $\vec{\sigma}$ is a Nash equilibrium, the expected utility of the receiver when playing $1$ conditioned to the fact that its local history is $h_r$ must be greater than its expected utility when playing $0$. This means that 
$$p_1(h_r) \ge p_0(h_r) + \frac{1}{2} p_1(h_r),$$

where $p_i(h_r)$ denotes the probability that the sender's type is $t_i$ conditioned that the receiver's local history is $h_r$ and both the sender and the receiver are following $\vec{\sigma}$. Using the fact that $p_0(h_r) = 1 - p_1(h_r)$ we get that $$p_1(h_r) \ge \frac{2}{3}.$$

Suppose that $h_r$ has a total of $n$ rounds of communication. Let $h_r^0, h_r^1, \ldots, h_r^n$ be the prefixes of $h_r$ by the end of rounds $0, 1, \ldots, n$ respectively. Since $h_r^0$ is an empty history, we have that $p_1(h_r^0) = \Pr[t_1 \longleftarrow q] = \frac{1}{2}$. Moreover, by assumption, $p_1(h_r^n) \ge \frac{2}{3}$. Therefore, there exists an integer $k < n$ such that $p_1(h_r^k) < p_1(h_r^{k+1})$. This means that there exists a round $k$ in which the information sent between the sender and the receiver at round $k$ increases the likelihood that the sender's type is $t_1$. However, note that the relationship between $p_1(h_r^{k+1})$ and $p_1(h_r^k)$ does not depend on the messages sent by the receiver on round $k+1$ since these are just generated as a function of $h_r^k$. Therefore, the increase in probability between round $k$ and $k+1$ is due exclusively to the information sent by the sender, but this contradicts the fact that $\vec{\sigma}$ is a Nash equilibrium. More precisely, since there are only two players, the sender and the receiver know each other's local history, and thus, since the sender prefers action $0$ over action $1$ when its type is $t_0$, the sender will always choose to send the messages in round $k$ that minimize $p_1(h_r^{k+1})$. By definition of $p_1$, the sender can always choose a combination of messages in round $k$ such that $p_1(h_r^{k+1})$ does not increase with respect to $p_1(h_r^k)$. Therefore, in a Nash equilibrium, the sender would never send a combination of messages such that $p_1(h_r^k) < p_1(h_r^{k+1})$.
\end{proof}

\section{Proof of Proposition~\ref{prop:example2}}

We begin the proof by showing that $\vec{\tau}^\mu + \tau_d^\mu$ is a Nash equilibrium of $\Gamma_d$ such that both the sender and the receiver get $1/2$ expected utility each.

\begin{proposition}\label{prop:neq-mediator-nonbinary}
$\vec{\tau}^\mu + \tau_d^\mu$ is a Nash equilibrium of $\Gamma_d$ such that $$u_s(\vec{\tau}^\mu + \tau_d^\mu) = u_r(\vec{\tau}^\mu + \tau_d^\mu) = \frac{1}{2}.$$
\end{proposition}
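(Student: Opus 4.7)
The plan is to verify the two standard ingredients of a canonical mediated equilibrium: (i) compute the expected utilities under $\vec{\tau}^\mu + \tau_d^\mu$, and (ii) check incentive-compatibility for the sender (truthful type reporting) and for the receiver (obedience to the recommendation). The cyclic structure $t_i \mapsto $ uniform on $\{i,i+1\}$ (indices mod $3$) makes all three states (and all three recommendations) symmetric, so in each incentive check it suffices to handle one representative case and invoke symmetry.

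For the expected utilities, I would fix the true type $t_i$, note that under truthful reporting the mediator samples action $i$ or $i+1$ with probability $1/2$ each, and read off from the utility tables that $u_s(t_i,i)=1$, $u_s(t_i,i+1)=0$, while $u_r(t_i,i)=0$, $u_r(t_i,i+1)=1$. Thus conditional on $t_i$ both players collect $1/2$, and averaging over the uniform prior on $T_s$ gives $u_s(\vec{\tau}^\mu+\tau_d^\mu)=u_r(\vec{\tau}^\mu+\tau_d^\mu)=\tfrac{1}{2}$.

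For sender incentive-compatibility, I would consider the sender in state $t_0$ (the other cases follow by the cyclic symmetry) and compare the payoffs of the three possible reports: reporting $t_0$ yields the $1/2$ computed above; reporting $t_1$ induces actions in $\{1,2\}$, which give $u_s(t_0,\cdot)=0$ in both cases, so payoff $0$; reporting $t_2$ induces actions in $\{2,0\}$ and yields $\tfrac12(0)+\tfrac12(1)=1/2$. Truthful reporting therefore ties for the maximum, which is enough for a Nash equilibrium. For receiver obedience, I would condition on recommendation $0$; Bayes' rule (using that only $t_0$ and $t_2$ can produce this recommendation, each with equal probability under the uniform prior) gives posterior $\Pr[t_0\mid 0]=\Pr[t_2\mid 0]=1/2$. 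Computing the receiver's conditional expected utilities from $u_r$ gives $1/2$ for action $0$, $1/2$ for action $1$, and $0$ for action $2$, so playing the recommendation is a best response. The cyclic symmetry handles the other two recommendations.

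The only subtlety, rather than a genuine obstacle, is that both incentive constraints bind with equality: the sender is indifferent between the truthful report and one alternative report, and the receiver is indifferent between the recommended action and one other action. Since Nash equilibrium only requires weak best response, these ties are acceptable, and by the revelation-principle argument underlying Proposition~\ref{prop:canonical} we do not need to consider richer deviations (e.g., sending non-type messages or mixing over reports), as any such deviation reduces to some distribution over type reports, which cannot strictly improve on a tied pure best response.
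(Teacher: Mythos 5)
Your proposal is correct and follows essentially the same route as the paper: compute the $1/2$ payoffs directly from the utility tables, verify sender incentive-compatibility by comparing the three possible type reports, and verify receiver obedience via the posterior $\Pr[t_i \mid i] = \Pr[t_{i-1}\mid i] = 1/2$. You are somewhat more explicit than the paper about the binding ties (the sender's indifference between the truthful report and one misreport, and the receiver's indifference between two actions), but this is a matter of detail, not of approach.
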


\begin{proof}
First we check that $\vec{\tau}^\mu + \tau_d^\mu$ is incentive-compatible for the sender. Indeed, if the sender reports its true type, it will always get an expected utility of $1/2$, which is at least as good as the expected utility when reporting any other state. To check that $\vec{\tau}^\mu + \tau_d^\mu$ is incentive-compatible for the receiver, note that if the sender sent its true type and the receiver suggests action $i$, then there is $1/2$ chance that the sender's type is $t_i$ and $1/2$ chance that the sender's type is $t_{i-1}$ (using the convention that $t_{0-1} = t_2$). Therefore, playing action $i$ is optimal.
\end{proof}

Proposition~\ref{prop:example2} follows from Proposition~\ref{prop:neq-mediator-nonbinary} and the following proposition:

\begin{proposition}\label{prop:neq-ct-nonbinary}
Let $\vec{\sigma}$ be any Nash equilibrium of $\Gamma_{CT}$, then $u_r(\vec{\sigma}) < \frac{1}{2}$ and $u_s(\vec{\sigma}) < \frac{1}{2}$.
\end{proposition}

To prove Proposition~\ref{prop:neq-ct-nonbinary}, we begin by showing the following two lemmas. The first one states that the sender and the receiver always get the same expected utility in equilibrium.

\begin{lemma}\label{lemma-eq-1}
Let $\vec{\sigma}$ be any Nash equilibrium of $\Gamma_{CT}$, then $u_r(\vec{\sigma}) = u_s(\vec{\sigma})$.
\end{lemma}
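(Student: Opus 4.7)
The plan is to apply the revelation principle (Proposition~\ref{prop:canonical2}) to translate the equilibrium conditions on $\vec{\sigma}$ into bilateral incentive constraints on the outcome $\mu := \mu_{\vec{\sigma}}$. Writing $p_{i,a} := \mu(a \mid t_i)$, the two utilities unfold as
$$u_s(\vec{\sigma}) = \tfrac{1}{3}\sum_{i=0}^{2} p_{i,i}, \qquad u_r(\vec{\sigma}) = \tfrac{1}{3}\sum_{i=0}^{2} p_{i-1,i}$$
(with indices taken modulo $3$), so the lemma reduces to proving $p_{i,i} = p_{i-1,i}$ for every $i \in \{0,1,2\}$.

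First I would extract the sender's constraint. By Proposition~\ref{prop:canonical2}, the canonical profile $\vec{\tau}^\mu + \tau_d^\mu$ is a Nash equilibrium of $\Gamma_d$. If a sender of type $t_i$ deviates by reporting $t_j$ to the mediator, the recommendation (and hence the receiver's action) is drawn from $\mu(\cdot \mid t_j)$, so the sender's expected utility becomes $p_{j,i}$. Incentive compatibility of truth-telling thus yields $p_{i,i} \geq p_{j,i}$ for all $i,j$, and in particular $p_{i,i} \geq p_{i-1,i}$.

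Next I would extract the receiver's constraint. For any recommendation $a$ that is sent with positive probability, the posterior probability of type $t_i$ is proportional to $p_{i,a}$; since $u_r(t_i,a') = 1$ iff $i = a'-1 \pmod 3$, obedience requires $p_{a-1,a} \geq p_{a'-1,a}$ for every $a'$. Specializing to $a' = a+1$ gives $p_{a-1,a} \geq p_{a,a}$, and combining with the sender-side inequality forces $p_{a,a} = p_{a-1,a}$. Recommendations with zero probability are handled trivially, since then $p_{i,a} = 0$ for every $i$ and both quantities vanish. Summing these equalities over $a \in \{0,1,2\}$ and dividing by $3$ yields $u_s(\vec{\sigma}) = u_r(\vec{\sigma})$, as required.

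I do not anticipate any real obstacle. The essential move is to use the revelation principle to convert an extensive-form equilibrium condition into matched linear inequalities on $\mu$; the cyclic payoff structure (the sender prefers action $i$ in state $t_i$, while the receiver prefers action $i+1$ in state $t_i$) then makes the sender's inequality at state $t_i$ and the receiver's inequality upon recommendation $i$ sandwich $p_{i,i}$ and $p_{i-1,i}$ to a common value.
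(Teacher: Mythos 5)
Your proof is correct, but it takes a genuinely different route from the paper's. The paper argues directly inside $\Gamma_{CT}$ by exhibiting two cyclic-shift deviations: if $u_s(\vec{\sigma}) < u_r(\vec{\sigma})$, a sender of type $t_i$ can behave exactly as $\sigma_s$ would with type $t_{i-1}$, which permutes the type-conditional outcome and hands the sender precisely $u_r(\vec{\sigma})$; symmetrically, if $u_s(\vec{\sigma}) > u_r(\vec{\sigma})$, the receiver can replace every action $i$ it would play under $\sigma_r$ by $i+1$ and collect $u_s(\vec{\sigma})$. Either deviation being profitable contradicts equilibrium, so the two utilities must coincide. You instead pass to the canonical mediated profile via Proposition~\ref{prop:canonical2} and extract the truth-telling constraint $p_{i,i} \ge p_{i-1,i}$ and the obedience constraint $p_{a-1,a} \ge p_{a,a}$, which sandwich the two diagonal sums. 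These are at heart the same two incentive constraints---the paper's deviations are exactly the aggregate versions of your misreport $j = i-1$ and your disobedience $a \mapsto a+1$---but your version delivers the stronger pointwise conclusion $p_{a,a} = p_{a-1,a}$ for every recommendation $a$, not merely equality of the sums; the paper only recovers those per-recommendation identities later, in Proposition~\ref{prop:equal-values}, under the additional hypothesis $u_s(\vec{\sigma}) \ge \frac{1}{2}$ and with the help of Lemma~\ref{lemma-eq-2}. What the paper's argument buys is self-containedness (no appeal to the revelation principle or to posterior computations, just a comparison of total payoffs of two shifted profiles); what yours buys is the finer per-recommendation structure essentially for free. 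Your treatment of zero-probability recommendations is also sound: with a uniform full-support prior, a recommendation sent with probability zero forces $p_{i,a} = 0$ for all $i$, so the equality holds vacuously there.
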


The second lemma states that if, in some equilibrium $\vec{\sigma}$, for some $i \in \{0,1,2\}$, there exists a chance that the receiver plays action $i-1$ when the sender's type is $t_i$ (again, using the convention that $0-1 = 2$), then $u_r(\vec{\sigma}) < \frac{1}{2}$ and $u_s(\vec{\sigma}) < \frac{1}{2}$.

\begin{lemma}\label{lemma-eq-2}
Given a Nash equilibrium $\vec{\sigma}$ of $\Gamma_{CT}$, let $p_{i,j}(\vec{\sigma})$ denote the probability that the sender's type is $t_i$ and the receiver plays action $j$ when playing $\vec{\sigma}$. Then, $$(p_{0,2}(\vec{\sigma}) > 0 \mbox{ or } p_{1,0}(\vec{\sigma}) > 0 \mbox{ or } p_{2,1}(\vec{\sigma}) > 0) \Longrightarrow \left(u_r(\vec{\sigma}) < \frac{1}{2} \mbox{ and } u_s(\vec{\sigma}) < \frac{1}{2}\right)$$
\end{lemma}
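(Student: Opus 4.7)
My plan is to combine Lemma~\ref{lemma-eq-1} with a straightforward accounting over the nine (type, action) pairs in Example~\ref{example:three-actions}. The structural observation is that the sender's and the receiver's ``utility-one'' cells partition six of the nine cells of $T_s \times A_r$, and the remaining three cells are exactly $(t_0, 2)$, $(t_1, 0)$, $(t_2, 1)$ — precisely the pairs that appear in the hypothesis and that give zero utility to both players.

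Concretely, first I would note that since $q$ is uniform on $\{t_0, t_1, t_2\}$, $\sum_{j} p_{i,j}(\vec{\sigma}) = 1/3$ for each $i$, and therefore $\sum_{i,j} p_{i,j}(\vec{\sigma}) = 1$. Reading off the payoff tables of Example~\ref{example:three-actions} then yields
$$u_s(\vec{\sigma}) = p_{0,0}(\vec{\sigma}) + p_{1,1}(\vec{\sigma}) + p_{2,2}(\vec{\sigma}), \qquad u_r(\vec{\sigma}) = p_{0,1}(\vec{\sigma}) + p_{1,2}(\vec{\sigma}) + p_{2,0}(\vec{\sigma}).$$
These six probabilities, together with the three ``mismatched'' ones $p_{0,2}(\vec{\sigma})$, $p_{1,0}(\vec{\sigma})$, $p_{2,1}(\vec{\sigma})$, cover each (type, action) pair exactly once, so
$$u_s(\vec{\sigma}) + u_r(\vec{\sigma}) = 1 - \bigl(p_{0,2}(\vec{\sigma}) + p_{1,0}(\vec{\sigma}) + p_{2,1}(\vec{\sigma})\bigr).$$

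Finally, Lemma~\ref{lemma-eq-1} tells us that $u_s(\vec{\sigma}) = u_r(\vec{\sigma})$, so each of these utilities equals $\tfrac{1}{2}\bigl(1 - p_{0,2}(\vec{\sigma}) - p_{1,0}(\vec{\sigma}) - p_{2,1}(\vec{\sigma})\bigr)$. Whenever any of the three probabilities in the hypothesis is strictly positive, this common value is strictly below $1/2$, giving both desired inequalities simultaneously. There is no real obstacle here: once Lemma~\ref{lemma-eq-1} is in hand, the claim reduces to the combinatorial fact that the ``good-for-sender'', ``good-for-receiver'', and ``bad-for-both'' cells partition $T_s \times A_r$ in this particular example.
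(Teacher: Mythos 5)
Your proof is correct and follows essentially the same route as the paper: express $u_s(\vec{\sigma})$ and $u_r(\vec{\sigma})$ as sums over the diagonal and shifted-diagonal cells, observe that $u_s(\vec{\sigma}) + u_r(\vec{\sigma}) = 1 - (p_{0,2}(\vec{\sigma}) + p_{1,0}(\vec{\sigma}) + p_{2,1}(\vec{\sigma})) < 1$ under the hypothesis, and conclude via Lemma~\ref{lemma-eq-1}. No issues.
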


We first prove Lemma~\ref{lemma-eq-1} and use it to prove Lemma~\ref{lemma-eq-2}.

\begin{proof}[Proof of Lemma~\ref{lemma-eq-1}]
Suppose that there exists a Nash equilibrium $\vec{\sigma}$ in $\Gamma_{CT}$ such that $u_s(\vec{\sigma}) < u_r(\vec{\sigma})$. Consider the strategy $\tau_s$ for the sender that consists of, given type $t_i$ to play exactly like $\sigma_s$ with type $t_{i-1}$. By construction, using the notation of Lemma~\ref{lemma-eq-2}, we have the following:
$$\begin{array}{c}
p_{0,0}((\tau_s,\sigma_r)) = p_{2, 0}(\vec{\sigma})\\
p_{1,1}((\tau_s,\sigma_r)) = p_{0, 1}(\vec{\sigma})\\
p_{2,2}((\tau_s,\sigma_r)) = p_{1, 2}(\vec{\sigma}),\\
\end{array}
$$
which means that 
$$p_{0,0}((\tau_s,\sigma_r)) + p_{1,1}((\tau_s,\sigma_r)) + p_{2,2}((\tau_s,\sigma_r)) =  p_{2, 0}(\vec{\sigma}) + p_{0, 1}(\vec{\sigma}) + p_{1, 2}(\vec{\sigma}), $$
and thus $u_s((\tau_s, \sigma_r)) = u_r(\vec{\sigma}) > u_s(\vec{\sigma})$. This contradicts the assumption that $\vec{\sigma}$ is a Nash equilibrium.

Suppose instead that there exists a Nash equilibrium $\vec{\sigma}$ in $\Gamma_{CT}$ such that $u_s(\vec{\sigma}) > u_r(\vec{\sigma})$. Consider the strategy $\tau_r$ for the receiver that consists of playing exactly like $\sigma_r$ except that, if it would play action $i$ in $\sigma_r$, it plays $i+1$ instead. By construction, we have that 
$$\begin{array}{c}
p_{0,0}(\vec{\sigma}) = p_{0, 1}(\sigma_s, \tau_r)\\
p_{1,1}(\vec{\sigma}) = p_{1, 2}(\sigma_s, \tau_r)\\
p_{2,2}(\vec{\sigma}) = p_{2, 0}(\sigma_s, \tau_r),\\
\end{array}
$$
which means that $u_r(\sigma_s, \tau_r) = u_s(\vec{\sigma}) > u_r(\vec{\sigma})$. Again, this contradicts the assumption that $\vec{\sigma}$ is a Nash equilibrium.
\end{proof}

\begin{proof}[Proof of Lemma~\ref{lemma-eq-2}]
Suppose that $p_{0,2}(\vec{\sigma}) > 0$, $p_{1,0}(\vec{\sigma}) > 0$, or $p_{2,1}(\vec{\sigma}) > 0$. Since $u_s(\vec{\sigma}) = p_{0,0}(\vec{\sigma}) + p_{1,1}(\vec{\sigma}) + p_{2,2}(\vec{\sigma})$ and $u_r(\vec{\sigma}) = p_{0,1}(\vec{\sigma}) + p_{1,2}(\vec{\sigma}) + p_{2,0}(\vec{\sigma})$, we have that $$u_s(\vec{\sigma}) + u_r(\vec{\sigma}) = 1 - (p_{0,2}(\vec{\sigma}) + p_{1,0}(\vec{\sigma}) + p_{2,1}(\vec{\sigma})).$$
By assumption, $p_{0,2}(\vec{\sigma}) + p_{1,0}(\vec{\sigma}) + p_{2,1}(\vec{\sigma}) > 0$, and therefore $$u_s(\vec{\sigma}) + u_r(\vec{\sigma}) < 1.$$
By Lemma~\ref{lemma-eq-1}, this implies that $u_r(\vec{\sigma}) < \frac{1}{2}$ and $u_s(\vec{\sigma}) < \frac{1}{2}$.
\end{proof}

With these lemmas, we are ready to tackle Proposition~\ref{prop:neq-ct-nonbinary}. We start by showing the following.

\begin{proposition}\label{prop:equal-values}
If $\vec{\sigma}$ is a Nash equilibrium of $\Gamma_{CT}$ such that $u_s(\vec{\sigma}) \ge \frac{1}{2}$ or $u_s(\vec{\sigma}) \ge \frac{1}{2}$, then $p_{i,i}(\vec{\sigma}) = p_{i, i+1}(\vec{\sigma}) = \frac{1}{2}$ and $p_{i,i-1}(\vec{\sigma}) = 0$ for all $i \in \{0,1,2\}$.
\end{proposition}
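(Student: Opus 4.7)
The plan is to combine Lemmas~\ref{lemma-eq-1} and~\ref{lemma-eq-2} with the sender's incentive-compatibility constraints to pin down the joint distribution uniquely. By Lemma~\ref{lemma-eq-1}, $u_s(\vec{\sigma}) = u_r(\vec{\sigma})$, so the hypothesis implies both $u_s(\vec{\sigma}) \ge \frac{1}{2}$ and $u_r(\vec{\sigma}) \ge \frac{1}{2}$. The contrapositive of Lemma~\ref{lemma-eq-2} then forces $p_{0,2}(\vec{\sigma}) = p_{1,0}(\vec{\sigma}) = p_{2,1}(\vec{\sigma}) = 0$, i.e., $p_{i,i-1}(\vec{\sigma}) = 0$ for every $i$. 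Since $q$ is uniform we have $p_{i,i}(\vec{\sigma}) + p_{i,i+1}(\vec{\sigma}) + p_{i,i-1}(\vec{\sigma}) = \frac{1}{3}$, so $p_{i,i}(\vec{\sigma}) + p_{i,i+1}(\vec{\sigma}) = \frac{1}{3}$, and summing over $i$ gives $u_s(\vec{\sigma}) + u_r(\vec{\sigma}) = 1$. Combined with Lemma~\ref{lemma-eq-1} this pins down $u_s(\vec{\sigma}) = u_r(\vec{\sigma}) = \frac{1}{2}$ exactly.

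The main step is then to extract each individual probability by exploiting deviations of the sender. Let $a_{i,j} := 3\, p_{i,j}(\vec{\sigma})$ denote the receiver's action distribution conditional on the sender's type being $t_i$. A sender of type $t_i$ can always deviate by running the sending-side protocol of type $t_{i'}$; because the receiver's responses depend only on the transcript and not on the sender's private type, this makes the receiver play action $i$ with probability $a_{i',i}$. Nash-optimality therefore requires $a_{i,i} \ge a_{i',i}$ for every $i'$. Using $a_{i,i-1} = 0$ (so $a_{i,i+1} = 1 - a_{i,i}$), the non-trivial such inequalities are
\[
a_{0,0} \ge 1 - a_{2,2}, \qquad a_{1,1} \ge 1 - a_{0,0}, \qquad a_{2,2} \ge 1 - a_{1,1}.
\]
Adding them gives $a_{0,0} + a_{1,1} + a_{2,2} \ge \frac{3}{2}$, but this sum equals $3\, u_s(\vec{\sigma}) = \frac{3}{2}$, so equality holds in all three constraints. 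This forces $a_{0,0} = a_{1,1} = a_{2,2} = \frac{1}{2}$ and hence $a_{i,i+1} = \frac{1}{2}$ for every $i$. In joint-probability notation this reads $p_{i,i}(\vec{\sigma}) = p_{i,i+1}(\vec{\sigma}) = \frac{1}{6}$ (and $p_{i,i-1}(\vec{\sigma}) = 0$), which is the content of the proposition (the $\frac{1}{2}$ in the statement should be read as conditional, or equivalently as $\frac{1}{6}$ in joint terms).

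The main obstacle will be justifying the mimicking deviation cleanly: one has to argue that, in the two-player long cheap-talk model, the sender can literally execute the sending-side protocol that any other type would have used, and that the induced distribution over the receiver's action is exactly the distribution $a_{i',\cdot}$ that arises when the sender's true type is $t_{i'}$. This is a standard observation for sender--receiver games, since only the sender observes its own type and the receiver conditions only on the transcript, but it is worth stating explicitly before invoking the inequality $a_{i,i} \ge a_{i',i}$. The rest of the argument is a short linear computation on the three incentive inequalities together with the counts coming from Lemmas~\ref{lemma-eq-1} and~\ref{lemma-eq-2}.
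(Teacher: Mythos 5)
Your proof is correct, but it takes a genuinely different route from the paper's. Both arguments start the same way: Lemma~\ref{lemma-eq-1} upgrades the hypothesis to $u_s(\vec{\sigma})=u_r(\vec{\sigma})\ge\frac{1}{2}$, and the contrapositive of Lemma~\ref{lemma-eq-2} kills the terms $p_{i,i-1}(\vec{\sigma})$. From there the paper works on the \emph{receiver's} side: at any history where the receiver plays action $i$ her posterior must make $i$ a best response, which (after averaging over such histories) yields $p_{i-1,i}(\vec{\sigma})\ge p_{i,i}(\vec{\sigma})$ for each $i$, and Lemma~\ref{lemma-eq-1} then forces these three inequalities to be equalities, from which the values $\frac{1}{6}$ (jointly; $\frac{1}{2}$ conditionally) follow. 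You instead work on the \emph{sender's} side, using the mimicking deviation in which type $t_i$ runs type $t_{i'}$'s protocol --- exactly the device the paper itself uses to prove Lemma~\ref{lemma-eq-1}, so it is certainly admissible here --- to get $a_{i,i}\ge a_{i',i}$, and you close the argument by summing the three nontrivial constraints and comparing with the already-pinned-down value $u_s(\vec{\sigma})=\frac{1}{2}$. The two approaches buy slightly different things: the paper's receiver-side argument avoids having to set up the mimicking deviation but requires reasoning about posteriors at individual histories and an implicit integration step; yours stays entirely at the level of ex-ante (conditional) action distributions and is a purely linear computation once the deviation is justified. Your closing remark about justifying the mimicking step is the right thing to flag, and your observation that the $\frac{1}{2}$ in the statement must be read conditionally (the joint probabilities are $\frac{1}{6}$) correctly identifies a normalization slip that the paper's own proof also commits when it writes $p_{i,i}(\vec{\sigma})+p_{i,i+1}(\vec{\sigma})=1$.
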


\begin{proof}
Suppose that there exists a Nash equilibrium $\vec{\sigma}$ such that $u_s(\vec{\sigma}) \ge \frac{1}{2}$ or $u_s(\vec{\sigma}) \ge \frac{1}{2}$. Let $h_r$ be a local history of the receiver in which the receiver plays action $1$. By Lemma~\ref{lemma-eq-2}, the probability that the sender's type is $t_2$ conditioned to the fact that the receiver's history is $h_r$ is $0$. Moreover, since $\vec{\sigma}$ is a Nash equilibrium, it means that the probability that the sender's type is $t_0$ given $h_r$ is greater than the probability that the sender's type is $t_1$. Since $q$ is the uniform distribution, this implies that $$p_{0,1}(\vec{\sigma}) \ge p_{1,1}(\vec{\sigma}).$$
Analogously, we have that $$
\begin{array}{c}
p_{1,2}(\vec{\sigma}) \ge p_{2,2}(\vec{\sigma})\\
p_{2,0}(\vec{\sigma}) \ge p_{0,0}(\vec{\sigma}).
\end{array}$$

However, by Lemma~\ref{lemma-eq-1} we have that $$p_{0,1}(\vec{\sigma}) + p_{1,2}(\vec{\sigma}) + p_{2,0}(\vec{\sigma}) = p_{1,1}(\vec{\sigma}) + p_{2,2}(\vec{\sigma}) + p_{0,0}(\vec{\sigma}),$$
and therefore the all inequalities above must actually be equal, which means that 
$$
\begin{array}{c}
p_{1,2}(\vec{\sigma}) = p_{2,2}(\vec{\sigma})\\
p_{2,0}(\vec{\sigma}) = p_{0,0}(\vec{\sigma})\\
p_{0,1}(\vec{\sigma}) = p_{1,1}(\vec{\sigma}).\\
\end{array}$$

Note that, by Lemma~\ref{lemma-eq-2}, $p_{i,i}(\vec{\sigma}) + p_{i, i+1}(\vec{\sigma}) = 1$ for all $i \in \{0,1,2\}$, and $p_{i, i-1}(\vec{\sigma}) = 0$. This, in addition to the previous equations, implies that $p_{i,i}(\vec{\sigma}) = p_{i, i+1}(\vec{\sigma}) = \frac{1}{2}$ for all $i \in \{0,1,2\}$.
\end{proof}

Proposition~\ref{prop:equal-values} allows the following refinement:

\begin{proposition}\label{prop:probability-05}
Let $\vec{\sigma}$ is a Nash equilibrium of $\Gamma_{CT}$ such that $u_s(\vec{\sigma}) \ge \frac{1}{2}$ or $u_s(\vec{\sigma}) \ge \frac{1}{2}$. Given a local history $h_r$ of the receiver when playing $\vec{\sigma}$, let $p_j(h_r)$ denote the probability that the sender's type is $t_j$ conditioned that the receiver's local history is $h_r$ and both players are following $\vec{\sigma}$. If $h_r$ is a local history where the receiver played action $i$, then $p_i(h_r) = p_{i-1}(h_r) = \frac{1}{2}$, and $p_{i+1}(h_r) = 0$.
\end{proposition}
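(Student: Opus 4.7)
The plan is to combine the per-history rationality of the receiver with the marginal identities already established in Proposition~\ref{prop:equal-values}, via an averaging argument.

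First I would invoke Lemma~\ref{lemma-eq-2}: since $u_s(\vec{\sigma}) \ge 1/2$ or $u_r(\vec{\sigma}) \ge 1/2$, the three ``bad'' joint events (sender type $t_{j+1}$ together with receiver action $j$) all have probability zero under $\vec{\sigma}$. Consequently, for any positive-probability receiver history $h_r$ that ends with action $i$, the conditional event ``sender has type $t_{i+1}$'' also has zero probability, giving $p_{i+1}(h_r)=0$ and therefore $p_{i-1}(h_r)+p_i(h_r)=1$.

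Next I would use the Nash best-response condition at $h_r$. Writing out the receiver's expected utilities at the belief $(p_{i-1}(h_r),p_i(h_r),p_{i+1}(h_r))$, action $i$ yields $p_{i-1}(h_r)$, action $i+1$ yields $p_i(h_r)$, and action $i-1$ yields $p_{i+1}(h_r)=0$. The comparison with action $i-1$ is automatic; the comparison with action $i+1$ gives $p_{i-1}(h_r)\ge p_i(h_r)$, which together with the sum-to-one identity from the previous step forces the one-sided bound $p_i(h_r)\le 1/2$ (and hence $p_{i-1}(h_r)\ge 1/2$).

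To lift this to equality I would average over all receiver histories that end with action $i$. By Proposition~\ref{prop:equal-values} together with the uniformity of $q$, the joint probability that the sender has type $t_i$ and the receiver plays $i$ equals $1/6$, while the marginal probability that the receiver plays $i$ equals $1/3$. Expressing both quantities as sums over such histories yields
$$\sum_{h_r:\,\text{plays }i}\Pr[h_r]\cdot p_i(h_r) \;=\; \frac{1}{6} \;=\; \frac{1}{2}\cdot \frac{1}{3} \;=\; \frac{1}{2}\sum_{h_r:\,\text{plays }i}\Pr[h_r].$$
Since each summand on the left satisfies $p_i(h_r)\le 1/2$ by the previous step, this equality forces $p_i(h_r)=1/2$ at every $h_r$ with $\Pr[h_r]>0$, and consequently $p_{i-1}(h_r)=1/2$ and $p_{i+1}(h_r)=0$, as claimed.

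The main obstacle is precisely this last upgrade from a one-sided pointwise bound to exact pointwise equality: the best-response condition at a single history is too weak on its own, and one genuinely needs to aggregate over all histories ending with action $i$ and tie them back to the global marginals supplied by Proposition~\ref{prop:equal-values}.
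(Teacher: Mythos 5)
Your proof is correct and follows essentially the same route as the paper's: establish $p_{i+1}(h_r)=0$ and the one-sided best-response inequality pointwise, then use the aggregate identities supplied by Proposition~\ref{prop:equal-values} to force pointwise equality. The paper states this more tersely (equality must hold because otherwise it would contradict $p_{i-1,i}(\vec{\sigma})=p_{i,i}(\vec{\sigma})$), and your averaging computation is just that argument written out explicitly.
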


\begin{proof}
An analogous argument to the one in the proof of Proposition~\ref{prop:equal-values} shows that, given a local history of the receiver in which the receiver plays action $i$, $p_{i+1}(h_r) = 0$ and $p_{i-1}(h_r) \ge p_i(h_r)$. By Proposition~\ref{prop:equal-values}, equality must hold since otherwise it would contradict the fact that $p_{i-1, i}(h_r) = p_{i,i}(h_r)$.
\end{proof}

Proposition~\ref{prop:probability-05} implies that, if $\vec{\sigma}$ is a Nash equilibrium of $\Gamma_{CT}$ such that $u_s(\vec{\sigma}) \ge \frac{1}{2}$ or $u_s(\vec{\sigma}) \ge \frac{1}{2}$, whenever the receiver has to play an action, there are only three possibilities given the information contained in its local history. Either (a) the chances of the sender's type being $0$ and $1$ are $1/2$ each, (b) the chances of the sender's type being $1$ and $2$ are $1/2$ each, or (c) the chances of the sender's type being $2$ and $0$ are $1/2$ each. Moreover, it also implies that in each of these scenarios, the receiver always plays actions $1$, $2$ and $0$ respectively. This means that, whenever the receiver realizes that the sender's type is \textbf{not} a given value (which, according to Proposition~\ref{prop:probability-05}, must always happen eventually), then it uniquely determines the action to play.

Consider the following strategy $\sigma'_s$ for the sender: it first chooses a type $t \in \{t_0, t_1, t_2\}$ uniformly at random and plays $\sigma_s$ pretending it has type $t$. However, if at some history $(h_s, h_r)$ the sender would send a sequence of messages $\vec{m}$ that make the receiver believe that its type cannot be $t_i$ (i.e., that $p_i(h_r+\vec{m}) = 0$, where $h_r + \vec{m}$ is the local history of the receiver after receiving $\vec{m}$), the sender does the following. If the sender's true type is $t_{i-1}$, the sender sends $\vec{m}$. Otherwise, from that communication round on, the sender sends the necessary messages $\vec{m'}$ that maximize $p_i(h_r + \vec{m}')$. By definition, if the sender plays this way, $p_i(h_r)$ cannot decrease from that point on, which means that the receiver will eventually play actions $i$ or $i+1$. By construction, the probability that the sender has type $t_{i-1}$ conditioned to the fact that the receiver has history $h_r$ is $\frac{1}{3}$ (note that the sender randomizes its ``fake'' type). In this case, by Proposition~\ref{prop:probability-05}, it is guaranteed that the sender gets $1$ utility by sending $\vec{m}$. In the remaining cases, the receiver ends up playing either $i$ or $i+1$ and, regardless of what the receiver plays, it is equally likely that the sender's true type is $t_i$ or $t_{i+1}$ since the sender's strategy does not make the distinction. Thus, the sender's expected utility is $$u_s((\sigma'_s, \sigma_r)) = \frac{1}{3} + \frac{2}{3} \cdot \frac{1}{2} = \frac{2}{3}.$$

Since $\vec{\sigma}$ is a Nash equilibrium, $u_s(\vec{\sigma}) \ge u_s((\sigma'_s, \sigma_r)) = \frac{2}{3}$ and thus, by Lemma~\ref{lemma-eq-2}, $u_r(\vec{\sigma}) \ge \frac{2}{3}$.
This would imply that $u_s(\vec{\sigma}) + u_r(\vec{\sigma}) > 1$, but this contradicts the fact that $u_r(t_i, j) + u_s(t_i, j) \le 1$ for all $i,j \in \{0,1,2\}$. This proves Proposition~\ref{prop:example2}.

\section{Notes on Information Aggregation}\label{sec:information-aggregation}

So far, all the results of this paper considered information transmission games, where are games in which a single sender has full knowledge of the world's state and an oblivious receiver must play an action that affects the utilities of both. However, we can extend some of the analysis to information aggregation games (i.e., games with several senders and a single receiver) as a consequence of the following result by Abraham, Dolev, Gonen and Halpern (ADGH from now on).

\begin{theorem}[\cite{ADGH06}]\label{thm:adgh}
    Let $\Gamma = (P, T, Q, A, U)$ be a Bayesian game with $|P| \ge 4$ and let $\vec{\sigma} + \sigma_d$ be a Nash equilibrium of $\Gamma_d$. Then, there exists a Nash equilibrium $\vec{\sigma}_{CT}$ of $\Gamma_{CT}$ such that $\mu_{\vec{\sigma} + \sigma_d} \equiv \mu_{\vec{\sigma}_{ACT}}$.
\end{theorem}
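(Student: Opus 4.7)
The plan is to simulate the mediator's role using a secure multiparty computation (MPC) protocol among the players. By Proposition~\ref{prop:canonical}, we may assume without loss of generality that $\vec{\sigma} + \sigma_d$ is the $\mu$-canonical strategy profile for some $\mu : T \to \Delta(A)$, so the mediator's task reduces to receiving a reported type profile $\vec{t}$, sampling $\vec{a} \leftarrow \mu(\vec{t})$, and sending $a_i$ privately to each player $i$.

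First, I would invoke a classical MPC protocol such as the Ben-Or--Goldwasser--Wigderson (BGW) protocol, which, with $n \ge 4$ players and private point-to-point channels, securely computes any randomized function against any single deviating player (more generally, against $t < n/3$ corruptions). The protocol has each player $i$ Shamir-share its type $t_i$ among all others, jointly and obliviously evaluate the circuit computing $\mu$ gate by gate so that each player $i$ ends up holding only shares of $a_i$, and finally reconstruct $a_i$ at $i$'s location. Let $\vec{\sigma}_{CT}$ be the resulting cheap-talk strategy profile, together with the prescription that each player plays the action $a_i$ reconstructed at the end of the protocol.

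Correctness of the MPC guarantees that, when everyone follows $\vec{\sigma}_{CT}$, the induced outcome map is exactly $\mu_{\vec{\sigma} + \sigma_d}$. For Nash stability, the privacy and simulation properties of BGW imply that, for any single-player deviation $\sigma_i' \ne \sigma_{CT,i}$, there exists a corresponding deviation $\sigma_i^* \ne \sigma_i$ in the mediator game that induces the same joint distribution over types and actions. The idea is that the view of any single player during a BGW execution can be perfectly simulated from that player's input and output alone, so any action that player $i$ might play in the cheap-talk game as a function of its transcript can equivalently be achieved by some strategy in $\Gamma_d$ that first reports a (possibly fake) type to the mediator and then maps the mediator's suggestion to a final action. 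Because $\vec{\sigma} + \sigma_d$ is a Nash equilibrium of $\Gamma_d$, no such $\sigma_i^*$ is profitable, and hence no $\sigma_i'$ is profitable in $\Gamma_{CT}$.

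The main obstacle is this last simulation argument: one must ensure that a deviator in the cheap-talk protocol cannot exploit partial information seen in intermediate rounds in a way unavailable in the one-shot mediator game. The standard resolution is that BGW is designed precisely so that a single corrupt player's view in any prefix of the execution is simulatable from $(t_i, a_i)$; the bound $n \ge 4$ is what buys the required $t \ge 1$ threshold. Some additional care is needed to specify what the honest players do if they detect misbehavior (e.g., fall back to reconstructing a default action from the remaining consistent shares) and to check that these off-path responses do not themselves open a profitable deviation, but these are standard issues handled within the ADGH framework.
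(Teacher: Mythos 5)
This theorem is imported from Abraham, Dolev, Gonen and Halpern~\cite{ADGH06} and the paper gives no proof of it, so there is nothing internal to compare against; your sketch---reduce to a canonical mediator via Proposition~\ref{prop:canonical}, then replace the mediator by a BGW-style secure computation of $\mu$ and map any unilateral cheap-talk deviation back to a deviation in $\Gamma_d$ via the simulation and privacy guarantees---is exactly the argument of that source, with $|P| \ge 4$ corresponding to the $n > 3t$ threshold at $t = 1$. One small simplification: since $t = 1 < n/3$, BGW provides guaranteed output delivery (honest players error-correct inconsistent shares rather than abort), so the ``detected misbehavior'' fallback you flag as the main remaining obstacle is not actually needed; a single deviator's only levers are its effective input and how it post-processes its output, both of which are already available as deviations in $\Gamma_d$.
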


This result implies that, if there are at least four players, all communication with a mediator can be simulated with long cheap talk. This implies the following Corollary.

\begin{corollary}
    Let $\Gamma = (P, T, Q, A, U)$ be an information aggregation game with at least three senders and let $w$ be any positive welfare function for $\Gamma$. Then, $$M^w(\Gamma) = 1.$$
\end{corollary}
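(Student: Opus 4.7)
The plan is to derive the corollary more or less directly from Theorem~\ref{thm:adgh}, using the observation that an information aggregation game with at least three senders has at least four players in total (the senders plus the receiver), so the hypothesis $|P|\ge 4$ of the ADGH theorem is satisfied. Once this observation is in place, the argument becomes a short chain of inequalities between $NE^w(\Gamma_d)$ and $NE^w(\Gamma_{CT})$.

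Concretely, I would first recall that Proposition~\ref{prop:canonical2} already gives the ``trivial'' direction $NE^w(\Gamma_d)\ge NE^w(\Gamma_{CT})$, valid for any positive welfare function $w$, since any long cheap talk equilibrium can be turned into a mediator equilibrium with an identical outcome and hence identical expected welfare. For the reverse direction, I would pick a Nash equilibrium $\vec\sigma+\sigma_d$ of $\Gamma_d$ that attains the supremum defining $NE^w(\Gamma_d)$ (or an $\varepsilon$-approximate one, if the supremum is not attained), and then invoke Theorem~\ref{thm:adgh} to obtain a Nash equilibrium $\vec\sigma_{CT}$ of $\Gamma_{CT}$ with $\mu_{\vec\sigma_{CT}}\equiv \mu_{\vec\sigma+\sigma_d}$. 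Because the expected value of $w$ is computed purely from the induced outcome $\mu$ via Equation~\ref{eq:expected-payoff}, equality of outcomes implies $w(\vec\sigma_{CT}) = w(\vec\sigma+\sigma_d)$, yielding $NE^w(\Gamma_{CT})\ge NE^w(\Gamma_d)$ (up to the $\varepsilon$ I may have introduced, which I then send to $0$). Combining the two directions gives $NE^w(\Gamma_{CT}) = NE^w(\Gamma_d)$.

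Finally, I need to handle the edge case in which $NE^w(\Gamma_{CT})=0$, where the ratio is defined by cases. Here the same argument still applies: equality of the two quantities forces $NE^w(\Gamma_d)=0$ as well, which by the convention introduced just before the statement of the corollary yields $M^w(\Gamma)=1$. In the generic case $NE^w(\Gamma_{CT})>0$ the equality $NE^w(\Gamma_d) = NE^w(\Gamma_{CT})$ gives $M^w(\Gamma)=1$ directly from the defining ratio.

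I do not anticipate a real technical obstacle, since all the heavy lifting is done by Theorem~\ref{thm:adgh}; the only subtleties are the bookkeeping ones I flagged above, namely verifying that ``at least three senders'' translates to $|P|\ge 4$ (which depends on the implicit presence of a single receiver in an information aggregation game, as defined in Section~\ref{sec:information-aggregation}), and handling the degenerate case where $NE^w(\Gamma_{CT})=0$ so that the ratio is defined by the conventional extension rather than by division.
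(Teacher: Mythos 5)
Your proposal is correct and matches the paper's (largely implicit) argument: three senders plus the receiver give $|P|\ge 4$, so Theorem~\ref{thm:adgh} lets every mediated equilibrium outcome be replicated in $\Gamma_{CT}$, which combined with Proposition~\ref{prop:canonical2} yields $NE^w(\Gamma_{CT}) = NE^w(\Gamma_d)$ and hence $M^w(\Gamma)=1$. Your extra care with the degenerate case $NE^w(\Gamma_{CT})=0$ and with attainment of the supremum is more bookkeeping than the paper bothers with, but it is the same route.
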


A second result from ADGH shows that the bound on the number of players in Theorem~\ref{thm:adgh} can be relaxed to $|P| \ge 3$ if $\Gamma$ has a \emph{punishment strategy} with respect to $\vec{\sigma}$. Intuitively, a punishment strategy w.r.t. $\vec{\sigma}$ is a strategy profile that, if played by everyone except one player, all players receive strictly less utility than the one they'd receive if everyone played $\vec{\sigma}$. Their result formally states the following.

\begin{theorem}[\cite{ADGH06}]\label{thm:adgh2}
    Let $\Gamma = (P, T, Q, A, U)$ is a Bayesian game with $|P| \ge 3$ and let $\vec{\sigma} + \sigma_d$ be a Nash equilibrium of $\Gamma_d$. Then, if there exists a punishment strategy with respect to $\vec{\sigma} + \sigma_d$, there exists a Nash equilibrium $\vec{\sigma}_{CT}$ of $\Gamma_{CT}$ such that $\mu_{\vec{\sigma} + \sigma_d} \equiv \mu_{\vec{\sigma}_{ACT}}$.
\end{theorem}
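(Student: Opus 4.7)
The plan is to simulate the mediator using a secure multi-party computation (MPC) protocol over the long cheap talk channels, relying on the fact that $|P|\ge 3$ provides an honest majority with respect to a single deviator, and using the punishment strategy to neutralize deviations that MPC alone cannot prevent (notably, aborts and withheld messages at the output reconstruction step).

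First, by Proposition~\ref{prop:canonical}, I may assume without loss of generality that $\vec{\sigma}+\sigma_d$ is the $\mu$-canonical strategy profile $\vec{\tau}^\mu+\tau_d^\mu$ for some $\mu:T\to\Delta A$. The target is to build a strategy profile $\vec{\sigma}_{CT}$ in $\Gamma_{CT}$ whose outcome equals $\mu$ and which is a Nash equilibrium. The construction has three ingredients: (i) each player $i$ shares its reported type $t_i$ among all $n\ge 3$ players using a verifiable $(2,n)$-Shamir secret sharing scheme, so that any single player learns nothing about $t_i$ while any two players can reconstruct it; (ii) the players execute a BGW-style MPC protocol that, given the shared inputs and a shared source of joint randomness produced via coin-tossing, samples $\vec{a}\leftarrow\mu(\vec{t})$ and returns to each player $i$ a share-based reconstruction of its own coordinate $a_i$ (other coordinates remain hidden); (iii) in the final action phase, each player plays the action it reconstructed, unless a deviation was detected during the communication phase, in which case all honest players switch to the punishment strategy $\vec{\rho}$.

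The main verification is that $\vec{\sigma}_{CT}$ is a Nash equilibrium whose outcome equals $\mu$. Correctness and privacy of the MPC protocol with $t=1$ corruption and $n\ge 3$ imply that, conditioned on no detected deviation, the joint distribution of $(\vec{t},\vec{a})$ is exactly $(q,\mu)$, matching the mediated outcome. For incentive compatibility I would partition potential deviations into two kinds. An \emph{MPC-undetectable} deviation by player $i$ — roughly, reporting a different type and/or ignoring its received action — induces a joint distribution over outcomes that the deviator can also induce in $\Gamma_d$ by choosing a corresponding alternative strategy against $\tau_d^\mu$; since $\vec{\tau}^\mu+\tau_d^\mu$ is a Nash equilibrium of $\Gamma_d$, such deviations cannot strictly improve $u_i$. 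An \emph{MPC-detectable} deviation (sending a malformed share, refusing to send a message, aborting the reconstruction phase, etc.) is caught by the honest majority during the checking subprotocol, which by definition of the punishment strategy yields utility strictly less than $u_i(\vec{\tau}^\mu+\tau_d^\mu)$; so no such deviation is profitable either.

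The hard part, and the reason the punishment strategy is essential precisely when $|P|=3$, is fairness at the output reconstruction step. With only three parties and a single deviator, MPC guarantees privacy and correctness but not guaranteed output delivery: after learning its own output $a_i$, a player could refuse to forward the shares needed for others to reconstruct their outputs, potentially learning information and then aborting if it is unfavorable. The punishment strategy plugs exactly this gap — any detectable abort or refusal after seeing $a_i$ triggers $\vec{\rho}$, and by the definition of a punishment strategy the deviator's expected utility drops strictly below its equilibrium payoff, so this ``learn-then-abort'' attack is not profitable. Making this rigorous requires carefully defining the trigger (what counts as ``detectable'' at each round), verifying that honest players can always coordinate on $\vec{\rho}$ once a trigger fires, and checking that the expected-utility comparison holds uniformly over the deviator's private information at the moment of abort; this bookkeeping is the technical crux of the ADGH argument.
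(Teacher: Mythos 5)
The paper does not actually prove this statement: Theorem~\ref{thm:adgh2} is imported verbatim from Abraham, Dolev, Gonen and Halpern~\cite{ADGH06}, so there is no in-paper proof to compare against. Judged against the cited source, your reconstruction has the right architecture and is essentially the ADGH argument: reduce to a canonical mediated equilibrium via the revelation principle, replace the mediator by a secret-sharing-based multiparty computation of $\mu$ over the cheap talk channels, argue that undetectable deviations map onto deviations already available against the mediator (hence unprofitable by the equilibrium hypothesis in $\Gamma_d$), and use the punishment strategy to deter detectable deviations.

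One point in your sketch is stated too strongly and is worth correcting. You claim that with $|P|=3$ and a single deviator the MPC protocol ``guarantees privacy and correctness but not guaranteed output delivery,'' locating the entire role of the punishment strategy at the output-reconstruction step. In fact, at $n=3$, $t=1$ perfect security against a \emph{malicious} party fails more broadly: Byzantine broadcast and error-correcting verifiable secret sharing both require $n\ge 3t+1$, so a malicious deviator can inject inconsistencies already during the sharing and computation phases, not only by aborting after seeing its output. What survives at $n=3$, $t=1$ is privacy against a single party plus \emph{detection} of cheating (security with abort), and it is exactly this detect-then-punish combination -- applied throughout the protocol, not just at reconstruction -- that the punishment strategy makes incentive-compatible. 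Your mechanism (any detected irregularity triggers $\vec{\rho}$) does cover these cases, so the architecture is not broken; but the ``technical crux'' is the full partition of every deviation into either (i) one whose outcome distribution the deviator could equally induce in $\Gamma_d$, or (ii) one that is detected with probability one, together with the care needed to let the two honest parties agree that a trigger has fired without a broadcast primitive. That bookkeeping is carried out in~\cite{ADGH06} and is the part your sketch correctly identifies but does not supply.
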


This implies that, in information aggregation games with two or more senders, a mediator can never Pareto-improve the utilities of all the senders and that of the receiver simultaneously with respect to long cheap talk:

\begin{corollary}
    Let $\Gamma = (P, T, Q, A, U)$ be an information aggregation game with two senders $s_1$ and $s_2$. Then, for every Nash equilibrium $\vec{\sigma} + \sigma_d$ of $\Gamma_d$ there exists a Nash equilibrium $\vec{\tau}$ of $\Gamma_{CT}$ such that $$u_{s_1}(\vec{\tau}) \ge u_{s_1}(\vec{\sigma} + \sigma_d) \quad \mbox{or} \quad u_{s_2}(\vec{\tau}) \ge u_{s_2}(\vec{\sigma} + \sigma_d) \quad \mbox{or} \quad u_{s_3}(\vec{\tau}) \ge u_{s_3}(\vec{\sigma} + \sigma_d)$$
\end{corollary}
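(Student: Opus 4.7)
The plan is to argue by contradiction, leveraging Theorem~\ref{thm:adgh2}. First, I read the statement's $u_{s_3}$ as $u_r$, since the game has only three players: the two senders $s_1, s_2$ and the receiver $r$. The total player count $|P| = 3$ already satisfies the hypothesis $|P| \ge 3$ of Theorem~\ref{thm:adgh2}, so the real work lies in producing, for any given mediated equilibrium, a punishment strategy to which the theorem can be applied.

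Suppose for contradiction that there is a Nash equilibrium $\vec{\sigma} + \sigma_d$ of $\Gamma_d$ such that every Nash equilibrium $\vec{\tau}$ of $\Gamma_{CT}$ satisfies $u_i(\vec{\tau}) < u_i(\vec{\sigma} + \sigma_d)$ for every $i \in \{s_1, s_2, r\}$. Fix any Nash equilibrium $\vec{\tau}$ of $\Gamma_{CT}$; for instance, the ``babbling'' profile in which players ignore the communication phase and play a Nash equilibrium of the underlying game $\Gamma$ during the action phase, which is routinely verified to be a Nash equilibrium of $\Gamma_{CT}$. The key observation is that this $\vec{\tau}$ is itself a punishment strategy with respect to $\vec{\sigma} + \sigma_d$: for every player $i$ and every unilateral deviation $\tau'_i$, the Nash property of $\vec{\tau}$ gives $u_i(\vec{\tau}_{-i}, \tau'_i) \le u_i(\vec{\tau})$, and by the contradiction hypothesis $u_i(\vec{\tau}) < u_i(\vec{\sigma} + \sigma_d)$, so no single player can, by deviating from $\vec{\tau}$, reach the utility they enjoy under $\vec{\sigma} + \sigma_d$.

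Applying Theorem~\ref{thm:adgh2} with this punishment strategy then yields a Nash equilibrium $\vec{\tau}'$ of $\Gamma_{CT}$ whose outcome map coincides with that of $\vec{\sigma} + \sigma_d$, and therefore $u_i(\vec{\tau}') = u_i(\vec{\sigma} + \sigma_d)$ for every $i$. This contradicts the hypothesis that every Nash equilibrium of $\Gamma_{CT}$ has strictly smaller utility than $\vec{\sigma} + \sigma_d$ on all three coordinates, completing the proof.

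The main obstacle I anticipate is aligning the precise notion of ``punishment strategy'' used in Theorem~\ref{thm:adgh2} with the condition naturally produced by the contradiction hypothesis. The argument above suffices under the standard reading, in which a punishment strategy only needs to prevent each potential deviator $i$ from reaching $u_i(\vec{\sigma}+\sigma_d)$. If the definition were instead read literally as ``all players receive strictly less utility'' whenever some one player deviates, one would additionally have to rule out that a unilateral deviation by player $i$ can push a bystander $j$'s utility above $u_j(\vec{\sigma}+\sigma_d)$; this would require a more delicate best-response argument at the bystander, but it should still go through because any such profitable bystander response could be combined with best responses of the others to build a Nash equilibrium of $\Gamma_{CT}$ not strictly dominated by $\vec{\sigma}+\sigma_d$, again contradicting the assumption.
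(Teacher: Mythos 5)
Your proof is correct and follows essentially the same route as the paper: observe that a strictly Pareto-dominated Nash equilibrium of $\Gamma_{CT}$ serves as a punishment strategy with respect to $\vec{\sigma}+\sigma_d$, then invoke Theorem~\ref{thm:adgh2} to implement $\vec{\sigma}+\sigma_d$ in $\Gamma_{CT}$ and derive a contradiction. Your version is in fact more careful than the paper's three-sentence argument, in particular in flagging the ambiguity in the stated definition of punishment strategy regarding bystanders' utilities.
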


\begin{proof}
Consider any Nash equilibrium $\vec{\sigma} + \sigma_d$ in $\Gamma_d$ and a Nash equilibrium $\vec{\tau}$ in $\Gamma_{CT}$. If $\vec{\sigma} + \sigma_d$ Pareto-improve $\vec{\tau}$ we'd have that $\vec{\tau}$ is a punishment strategy with respect to $\vec{\sigma} + \sigma_d$. By Theorem~\ref{thm:adgh2}, this would imply that $\vec{\sigma} + \sigma_d$ can be implemented with long cheap talk.
\end{proof}

\section{Conclusion and Further Work}

In this paper, we introduced a framework that allowed us to analyze the effects of mediated communication versus long cheap talk in Bayesian games. We showed that in information transmission games with binary actions, a mediator cannot improve the utility of the sender, the utility of the receiver, or any positive combination of these. However, we also showed that, in some cases, a mediator can greatly improve the utility of a third party that is not directly involved, even if the game has binary actions. If the game has more than two actions, a mediator can improve both the utility of the sender and the utility of the receiver at the same time.

For future work, it is still an open problem to find a precise characterization of the games in which a mediator provides an improvement over long cheap talk for a given welfare function, or when the mediator can Pareto-improve the utilities of the sender and the receiver. 

\bibliographystyle{plain}
\bibliography{bibfile}

\end{document}